\providecommand{\U}[1]{\protect\rule{.1in}{.1in}}
\newtheorem{theorem}{Theorem}
\newtheorem{acknowledgement}[theorem]{Acknowledgement}
\newtheorem{conjecture}[theorem]{Conjecture}
\newtheorem{corollary}[theorem]{Corollary}
\newtheorem{definition}[theorem]{Definition}
\newtheorem{lemma}[theorem]{Lemma}
\newtheorem{proposition}[theorem]{Proposition}
\newtheorem{remark}[theorem]{Remark}
\newenvironment{proof}[1][Proof]{\noindent\textbf{#1.} }{\ \rule{0.5em}{0.5em}}
\begin{document}

\title{Polar Duality Between Pairs of Transversal Lagrangian Planes; Applications to
Uncertainty Principles}
\author{Maurice A. de Gosson\thanks{maurice.de.gosson@univie.ac.at}\\University of Vienna\\Faculty of Mathematics (NuHAG)\\Oskar-Morgenstern-Platz 1\\1090 Vienna AUSTRIA}
\maketitle

\begin{abstract}
We extend the notion of polar duality to pairs $(\ell,\ell^{\prime})$ of
transversal Lagrangian planes in the standard symplectic space. $(\mathbb{R}%
^{2n},\omega)$. This allows us to show that polar duality has a natural
interpretation in terms of symplectic geometry. \ We show that the orthogonal
projections $\Omega_{\ell}$ and $\Omega_{\ell^{\prime}}$ of a centrally
symmetric \ convex body $\Omega$ satisfying $c_{\min}^{\mathrm{lin}}%
(\Omega)\geq\pi$ satisfy the duality relation $\Omega_{\ell}^{o}\subset
\Omega_{\ell^{\prime}}$.

\end{abstract}

\textbf{Keywords}: polar duality; Lagrangian plane; polar duality; symplectic
capacity; John and L\"{o}wner ellipsoids; uncertainty principle

\textbf{MSC 2020}: 52A20, 52A05, 81S10, 42B35

\section{Introduction}

Polar duality, in its usual acceptation, associates to a centrally symmetric
convex body $X$ in the Euclidean space $\mathbb{R}^{n}$ another convex body
$X^{o}$ in the dual space $(\mathbb{R}^{n})^{\ast}$ which is, for all
practical purposes, identified with $\mathbb{R}^{n}$ itself: $p^{\prime}\in
X^{o}$ if and only if $p^{\prime}\cdot x\leq1$ for all $x\in X$. The present
paper is based on the observation that if one identifies $\mathbb{R}^{n}$ and
is dual vector space $(\mathbb{R}^{n})^{\ast}$ with the coordinate Lagrangian
planes $\mathbb{\ell}_{X}=\mathbb{R}^{n}\times0$ and $\mathbb{\ell}%
_{P}=0\times\mathbb{R}^{n}$, respectively (for the canonical symplectic
structure $\omega$ on $T^{\ast}\mathbb{R}^{n}$), then the defining relation
$p^{\prime}\cdot x\leq1$ can be rewritten as $\omega(z,z^{\prime})\leq1$ where
$z=(x,0)\in\mathbb{\ell}_{X}$ and $z^{\prime}=(0,p^{\prime})\in\mathbb{\ell
}_{P}$. Thus, polar duality can be reformulated in terms of the symplectic
structure on $T^{\ast}\mathbb{R}^{n}$, and motivates a more general
definition: let $\ell$ and $\ell^{\prime}$ be two transverse Lagrangian planes
in $(T^{\ast}\mathbb{R}^{n},\omega$), and $X_{\ell}\subset\ell$ a centrally
symmetric convex body; the polar dual $X_{\ell^{\prime}}^{o}\subset
\ell^{\prime}$ is then defined as being the set of all $z^{\prime}\in
\ell^{\prime}$ such that $\omega(z,z^{\prime})\leq1$ for all $z\in\ell$. This
highlights the fact tat polar duality is essentially a property of symplectic
geometry. In the present work apply this extended version of polar duality to
the study of the projections of convex sets containing a symplectic ball on
pairs of transverse Lagrangian planes. More precisely:

\begin{itemize}
\item In Section \ref{sec1} we review the geometric tools we will need
(symplectic capacities, standard polar duality, Lagrangian planes), and define
the new concept of Lagrangian polar duality which is central to this work; we
establish the main properties of this duality relate it to symplectic capacities;

\item In Section \ref{sec2} we state and prove the main geometric results of
this paper (Proposition \ref{PropJohn}, Theorems \ref{Thm1} and \ref{Thm2}).
In Proposition \ref{PropJohn} we determine the John ellipsoid of the product
$X\times X^{o}$ and in Theorem \ref{Thm1} we show that if $(\ell,\ell^{\prime
})$ is a pair of transverse Lagrangian planes then the projections $\Pi_{\ell
}:\mathbb{R}^{2n}\longrightarrow\ell$ and $\Pi_{\ell^{\prime}}\mathbb{R}%
^{2n}\longrightarrow\ell$ in the directions $\ell^{\prime}$ and $\ell$ of a
symmetric convex body $\Omega\subset\mathbb{R}^{2n}$ containing a symplectic
ball $S(B^{2n}(1))$ form a Lagrangian polar dual pair. Theorem \ref{Thm2} is a
partial converse to Theorem \ref{Thm1}, highlighting the role played by the
John ellipsoid of a Lagrangian polar dual pair.

\item In Section \ref{secup} we discuss from a new point of view the
indeterminacy principle of quantum mechanics. We justify in Proposition
\ref{PropUP} the following claim: a quantum system localized in a in a
Lagrangian plane $\ell$ cannot be localized in a transverse Lagrangian plane
$\ell^{\prime}$ in a set smaller than its Lagrangian polar dual. This is a
geometric restatement of the uncertainty principle, which has the advantage of
being independent of the choice of any statistical measure.

\item In Section \ref{sechardy} we apply Lagrangian polar duality to a
reformulation of the multi-dimensional Hardy uncertainty principle about the
simultaneous decrease at infinity of a function and its Fourier transform,
thus generalizing previous results of ours.\bigskip
\end{itemize}

\noindent\textbf{Notation}. We denote by $\omega$ the standard symplectic form
on $T^{\ast}\mathbb{R}^{n}\equiv\mathbb{R}^{2n}$: $\omega=\sum_{j=1}^{n}%
dp_{j}\wedge dx_{j}$ (we use the splitting $z=(x,p)$. A diffeomorphism
$f\in\operatorname*{Diff}(n)$ of $\mathbb{R}^{2n}$ is called a
symplectomorphism of $(\mathbb{R}^{2n},\omega)$ if $f^{\ast}\omega=\omega$.
Symplectomorphisms form a subgroup$\ \operatorname*{Symp}(n)$ of
$\operatorname*{Diff}(n)$ equipped with the usual composition operation
\cite{Leonid}. Linear symplectomorphisms form a subgroup $\operatorname*{Sp}%
(n)$ of $\operatorname*{Symp}(n)$. We denote by $\operatorname*{Mp}(n)$ the
unitary representation (metaplectic group) of the double cover of
$\operatorname*{Sp}(n)$. Every $S\in\operatorname*{Sp}(n)$ is the projection
of two operators $\pm\widehat{S}\in\operatorname*{Mp}(n)$.

\section{Prerequisites\label{sec1}}

We will use the following notation for closed balls and symplectic cylinders
in $\mathbb{R}^{2n}$:
\begin{align*}
B^{2n}(R)  &  =\{z\in\mathbb{R}^{2n}:|z|^{2}\leq R^{2}\}\\
Z_{j}^{2n}(R)  &  =\{z\in\mathbb{R}^{2n}:|z_{j}|^{2}\leq R^{2}\}
\end{align*}
where $z_{j}=(x_{j},p_{j})$, $1\leq j\leq n$. Similarly, we denote by
$B^{2n}(z_{0},R)$ (resp. $Z_{j}^{2n}(z_{0},R)$) the ball (\textit{resp.}
cylinder) centered at $z_{0}\in\mathbb{R}^{2n}$. Recall that in view of
Gromov's \cite{gr85} symplectic non-squeezing theorem there exists
$f\in\operatorname*{Symp}(n)$ such that $f(B^{2n}(R))\subset Z_{j}^{2n}(r)$ if
and only if $R\leq r$. We will call \textquotedblleft symplectic
ball\textquotedblright\ (with radius $R$) the image of $B^{2n}(R)$ by
$S\in\operatorname*{Sp}(n)$.

\subsection{Symplectic capacities on the standard symplectic space}

A (normalized) symplectic capacity on $(\mathbb{R}^{2n},\omega)$ associates to
every subset $\Omega\subset\mathbb{R}^{2n}$ a number $c(\Omega)\in
\mathbb{[}0,+\infty\mathbb{]}$ such that the following properties hold
\cite{ekhof1,ekhof2}:

\begin{itemize}
\item \textit{Monotonicity}: If $\Omega\subset\Omega^{\prime}$ then
$c(\Omega)\leq c(\Omega^{\prime})$;

\item \textit{Conformality}: For every $\lambda\in\mathbb{R}$ we have
$c(\lambda\Omega)=\lambda^{2}c(\Omega)$;

\item \textit{Symplectic invariance}: $c(f(\Omega))=c(\Omega)$ for every
$f\in\operatorname*{Symp}(n)$;

\item \textit{Normalization}: We have, for $1\leq j\leq n$,
\begin{equation}
c(B^{2n}(R))=\pi R^{2}=c(Z_{j}^{2n}(R))~. \label{cbz}%
\end{equation}

\end{itemize}

The number $\pi R^{2}$ is sometimes called the symplectic radius of
$B^{2n}(R)$ (or $Z_{j}^{2n}(R)$). We will avoid this terminology because it is
confusing. Note that the conformality and normalization properties show that
for $n>1$ symplectic capacities are not related to volume; they have the
dimension of an area. This is exemplified by the Hofer--Zehnder capacity
\cite{HZ} which has the property that when $\Omega$ is a compact convex set in
$(\mathbb{R}^{2n},\omega)$ with smooth boundary $\partial\Omega$ then
\begin{equation}
c_{\mathrm{HZ}\ }(\Omega)=\int_{\gamma_{\min}}p_{1}dx_{1}+\cdot\cdot
\cdot+p_{n}dx_{n} \label{HZ}%
\end{equation}
where $\gamma_{\min}$ is the shortest (positively oriented) Hamiltonian
periodic orbit carried by $\partial\Omega$.

It follows from Gromov's symplectic non-squeezing theorem \cite{gr85} that the
formulas%
\begin{subequations}
\begin{align}
c_{\min}(\Omega)  &  =\sup_{f\in\operatorname*{Symp}(n)}\{\pi R^{2}%
:f(B^{2n}(R))\subset\Omega\}\label{cmin}\\
c_{\max}(\Omega)  &  =\inf_{f\in\operatorname*{Symp}(n)}\{\pi R^{2}%
:f(\Omega)\subset Z_{j}^{2n}(R) \label{cmax}%
\end{align}
define two symplectic capacities ($c_{\min}$ is sometimes called
\textquotedblleft Gromov's width\textquotedblright\ while $c_{\max}$ is
referred to as the \textquotedblleft cylindrical capacity\textquotedblright).
The notation is motivated by the fact that for every symplectic capacity on
$(\mathbb{R}^{2n},\omega)$ we have $c_{\min}\leq c\leq c_{\max}$.

Linear symplectic capacities $c^{\mathrm{lin}}$ are defined as above,
replacing everywhere the group $\operatorname*{Symp}(n)$ of symplectomorphisms
with the affine symplectic group $\operatorname*{ASp}(n)$ generated by the
affine symplectic transformations $ST(z_{0})=T(Sz_{0})S$\ where $z_{0}%
\in\mathbb{R}^{2n}$ and $S\in\operatorname*{Sp}(n)$ ($T(z_{0})$ is the
translation $z\longmapsto z+z_{0}$). One defines the minimum and maximum
linear symplectic capacities by
\end{subequations}
\begin{subequations}
\begin{align}
c_{\min}^{\mathrm{lin}}(\Omega)  &  =\sup_{\substack{S\in\operatorname*{Sp}%
(n)\\z_{0}\in\mathbb{R}^{2n}}}\{\pi R^{2}:S(B^{2n}(z_{0},R))\subset
\Omega\}\label{clmin}\\
c_{\max}^{\mathrm{lin}}(\Omega)  &  =\inf_{_{\substack{S\in\operatorname*{Sp}%
(n)\\z_{0}\in\mathbb{R}^{2n}}}}\{\pi R^{2}:ST\Omega)\subset Z_{j}^{2n}%
(z_{0},R)\}~ \label{clmax}%
\end{align}
and one has $c_{\min}^{\mathrm{lin}}\leq c^{\mathrm{lin}}\leq$ $c_{\max
}^{\mathrm{lin}}$ for every linear symplectic capacity $c^{\mathrm{lin}}$.

Of particular interest is the symplectic capacity of an ellipsoid. In what
follows we assume that $\Omega$ is an ellipsoid in $\mathbb{R}^{2n}$ given by
\end{subequations}
\[
\Omega=\{z\in\mathbb{R}^{2n}:Mz\cdot z\leq1\}
\]
where $M$ is a positive definite (symmetric) matrix of order $2n$. It is
well-known that for every symplectic capacity $c$ (\textit{resp.} linear
symplectic capacity $c^{\mathrm{lin}}$) we have
\begin{equation}
c(\Omega)=c^{\mathrm{lin}}(\Omega)=\frac{\pi}{\lambda_{\max}^{\omega}}
\label{capellipsoid}%
\end{equation}
where $\lambda_{\max}^{\omega}$ is the largest symplectic eigenvalue of the
matrix $M$ (the symplectic eigenvalues of $M$ are \cite{Birk,HZ} the numbers
$\lambda_{j}^{\omega}>0$ ($1\leq j\leq n$) such that the $\pm i\lambda
_{j}^{\omega}$ are the eigenvalues of the antisymmetric matrix $M^{1/2}%
JM^{1/2}$).

\subsection{Polar duality}

Let $X$ be a closed convex body in $\mathbb{R}_{x}^{n}$ (a convex body in an
Euclidean space is a compact convex set with non-empty interior). We assume in
addition that $X$ contains $0$ in its interior. This is the case if, for
instance, $X$ is centrally symmetric: $X=-X$. By definition The \emph{polar
dual} of $X$ is the subset
\begin{equation}
X^{o}=\{p\in\mathbb{R}^{n}:px\leq1\text{ \textit{for all} }x\in X\}
\label{omo}%
\end{equation}
of the dual space $\mathbb{R}_{p}^{n}\equiv(\mathbb{R}^{n})^{\ast}$ of
$\mathbb{R}_{x}^{n}\equiv\mathbb{R}_{x}^{n}$ (with this notation the phase
space $\mathbb{R}^{2n}$ is identified with the product $T^{\ast}%
(\mathbb{R}^{n})\equiv\mathbb{R}_{x}^{n}\times\mathbb{R}_{p}^{n}$.

It follows from the definition (\ref{omo}) that $X^{o}$ is convex. The
following properties of the polar dual are obvious:
\begin{equation}
\text{\textit{Biduality}: }(X^{o})^{o}=X~; \label{biduality}%
\end{equation}%
\begin{equation}
\text{\textit{Anti-monotonicity: }}X\subset Y\Longrightarrow Y^{o}\subset
X^{o}~; \label{antimonotonicity}%
\end{equation}%
\begin{equation}
\text{\textit{Scaling}: }L\in GL(n,\mathbb{R})\Longrightarrow(LX)^{o}%
=(L^{T})^{-1}X^{o}~. \label{scaling}%
\end{equation}

We also list the following properties of the polar duals of balls and ellipsoids:

Let $B_{X}^{n}(R)$ (\textit{resp}. $B_{P}^{n}(R)$) be the ball $\{x:|x|\leq
R\}$ in $\mathbb{R}_{x}^{n}$ (\textit{resp}. $\{p:|p|\leq R\}$ in
$\mathbb{R}_{p}^{n}$). We have
\begin{equation}
B_{X}^{n}(R)^{o}=B_{P}^{n}(R^{-1})~. \label{BhR}%
\end{equation}
In particular $B_{X}^{n}(1)^{o}=B_{P}^{n}(1)$. When the context is clear we
will write $B_{X}^{n}(R)=B_{P}^{n}(R)=B^{n}(R)$, \textit{etc}.

Let $L\in GL(n,\mathbb{R})$. In view of (\ref{scaling}) we have, for $L\in
GL(n,\mathbb{R})$,
\begin{equation}
(L(B^{n}(R)))^{o}=(L^{T})^{-1}B^{n}(R^{-1}) \label{BHL}%
\end{equation}
hence, if $A=A^{T}\in GL(n,\mathbb{R})$%
\begin{equation}
\{x\in\mathbb{R}_{x}^{n}:Ax\cdot x\leq1\}^{o}=\{p\in\mathbb{R}_{p}^{n}%
:A^{-1}p\cdot p\cdot\leq1\} \label{dualellh}%
\end{equation}
since the left-hand side is $(A^{-1/2}(B^{n}(1)))^{o}=A^{1/2}(B^{n}(1))$.

\begin{definition}
A pair $(X,P)\ $of centrally symmetric convex bodies $X\subset\mathbb{R}%
_{x}^{n}$ and $P\subset\mathbb{R}_{p}^{n}$ is called a \textquotedblleft dual
pair\textquotedblright\ if we have $X^{o}\subset P$ (or, equivalently,
$P^{o}\subset X$). If $X^{o}=P$ we say that $(X,P)$ is an exact dual pair.
\end{definition}

The following elementary result is straightforward:

\begin{lemma}
\label{LemmaYQ}\textit{Let} $(X,P)$ \textit{be a dual pair and} $Y,Q$
\textit{be symmetric} \textit{convex bodies such that }$X\subset Y$ and
$P\subset Q$\textit{. Then} $(Y,Q)$ \textit{is also a dual pair.}
\end{lemma}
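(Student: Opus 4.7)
The plan is to unwind the dual pair condition using the anti-monotonicity property of polar duality (stated as equation (\ref{antimonotonicity}) in the excerpt), which reverses inclusions under the $(\cdot)^o$ operation. Once the inclusion is reversed, the result will follow by an elementary chain of set containments.

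More concretely, I would argue as follows. Since $(X,P)$ is a dual pair, by definition we have $X^o \subset P$. The hypothesis $X \subset Y$ combined with anti-monotonicity (\ref{antimonotonicity}) yields $Y^o \subset X^o$. Chaining this with the dual pair condition and the hypothesis $P \subset Q$ gives
\[
Y^o \subset X^o \subset P \subset Q,
\]
which is exactly the statement that $(Y,Q)$ is a dual pair.

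There is no real obstacle here: the lemma is essentially a packaging of anti-monotonicity together with transitivity of set inclusion, and it is worth recording only because it will presumably be invoked repeatedly later when one enlarges both members of a dual pair (for instance, when passing from a symplectic ball and its polar dual to a larger convex body $\Omega$ containing it and its projection). No use is made of central symmetry beyond ensuring that the polar duals $X^o$ and $Y^o$ are well-defined convex bodies in the sense of the paper's conventions.
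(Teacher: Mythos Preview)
Your proof is correct and essentially identical to the paper's own argument: both establish the chain of inclusions $Y^{o}\subset X^{o}\subset P\subset Q$ by invoking anti-monotonicity of polar duality applied to $X\subset Y$.
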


\begin{proof}
It follows from the chain of inclusions $Y^{o}\subset X^{o}\subset P\subset Q$
using the anti-monotonicity of the passage to the dual.
\end{proof}

\begin{proposition}
\label{propell}\textit{The ellipsoids} $X=\{x:Ax^{2}\leq1\}$ \textit{and
}$P=\{p:Bp^{2}\leq1\}$ ($A,B>0$) \textit{form a dual pair if and only if
}$A\leq B^{-1}$ (or, equivalently, $AB\leq I_{n\times n}$), \textit{and
}$(X,P)$ is an exact dual pair \textit{if and only if} $AB=I_{n\times n}$.
\end{proposition}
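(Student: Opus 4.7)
The plan is to use formula (\ref{dualellh}) from the preceding subsection, which gives $X^{o}=\{p:A^{-1}p\cdot p\leq 1\}$ at once. The defining condition of a dual pair, $X^{o}\subset P$, then becomes an inclusion of two concentric origin-centered ellipsoids in $\mathbb{R}^{n}_{p}$; likewise, exactness amounts to the equality of these ellipsoids. So the whole statement is reduced to a comparison of two quadratic forms.

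The substantive ingredient is the standard fact that, for positive definite symmetric matrices $M$ and $N$, one has $\{z:Mz\cdot z\leq 1\}\subset\{z:Nz\cdot z\leq 1\}$ if and only if $N\leq M$ in the Loewner order. One direction is the immediate pointwise inequality of quadratic forms. For the converse, given any $z_{0}\neq 0$, rescaling so that $Mz_{0}\cdot z_{0}=1$ places $z_{0}$ in the first ellipsoid, hence in the second; this yields $Nz_{0}\cdot z_{0}\leq Mz_{0}\cdot z_{0}$, and homogeneity promotes this to $N\leq M$ on all of $\mathbb{R}^{n}$.

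Applying this lemma with $M=A^{-1}$ and $N=B$ gives $X^{o}\subset P \Leftrightarrow B\leq A^{-1}$, which, by the order-reversing property of inversion on positive definite matrices, is the same as $A\leq B^{-1}$. For exactness, the lemma applied in both directions forces $A^{-1}=B$, i.e.\ $AB=I_{n\times n}$; the converse is clear from the formula for $X^{o}$. The one subtlety worth flagging — and the main obstacle — is the passage from the Loewner inequality $A\leq B^{-1}$ to the written form $AB\leq I_{n\times n}$: since $AB$ need not be symmetric, this condition must be interpreted spectrally, which is legitimate because $AB$ is similar to the symmetric positive definite matrix $A^{1/2}BA^{1/2}$ via $A^{-1/2}(AB)A^{1/2}=A^{1/2}BA^{1/2}$, whose eigenvalues lie in $(0,1]$ precisely when $B\leq A^{-1}$.
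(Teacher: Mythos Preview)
Your argument is correct and follows essentially the same route as the paper: compute $X^{o}$ explicitly as an ellipsoid (you invoke (\ref{dualellh}), the paper the equivalent (\ref{BHL})) and then compare it with $P$. Your treatment is in fact more careful than the paper's terse one-liner, which passes directly from the inclusion $A^{1/2}(B^{n}(1))\subset B^{-1/2}(B^{n}(1))$ to the condition ``$A^{1/2}\leq B^{-1/2}$'' without justification and does not address the spectral reading of $AB\leq I_{n\times n}$ that you flag.
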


\begin{proof}
We have $X=A^{-1/2}(B^{n}(1))$ and $P=B^{-1/2}(B^{n}(1))$; in view of
(\ref{BHL}) \ the relation $X^{o}\subset P$ is thus equivalent to $A^{1/2}\leq
B^{-1/2}$ (with equality if and only if $X^{o}=P$).
\end{proof}

In \cite{arkaos13}, Remark 4.2, Artstein-Avidan \textit{et} \textit{al }prove
that\textit{ }if $(X,P)$ is an arbitrary pair of centrally symmetric convey
bodies $X\subset\mathbb{R}_{x}^{n}$ and $P\subset\mathbb{R}_{p}^{n}$ then we
have
\begin{equation}
c_{\mathrm{HZ}\ }(X\times P)=c_{\max}(X\times P)=4\sup\{\lambda>0:\lambda
X^{o}\subset P\}~.\label{yaron1}%
\end{equation}
It follows that if $(X,P)$ is a dual pair, then%
\begin{equation}
c_{\mathrm{HZ}\ }(X\times P)=c_{\max}(X\times P)\geq4\label{yaron2}%
\end{equation}
and, in particular,
\begin{equation}
c_{\mathrm{HZ}\ }(X\times P)=c_{\max}(X\times P)=4~.\label{yaron3}%
\end{equation}

\subsection{Lagrangian polar duality}

Let $(\mathbb{R}^{2n},\omega)$ be the standard symplectic space. We denote by
$\Lambda(n)$ the Lagrangian Grassmannian of $(\mathbb{R}^{2n},\omega)$: its
elements (the \textquotedblleft Lagrangian planes\textquotedblright) are the
$n$-dimensional subspaces $\ell$ of $\mathbb{R}^{2n}$ on which $\omega$
vanishes identically. We denote by $\operatorname*{Sp}(n)$ the standard
symplectic group. It consists of all automorphisms of $\mathbb{R}^{2n}$
preserving the symplectic form $\omega$. It follows that $\operatorname*{Sp}%
(n)$ is a closed subgroup of $GL(2n,\mathbb{R})$ and hence a classical Lie
group. The symplectic group acts transitively on the Lagrangian Grassmannian:
for every pair $(\ell,\ell^{\prime})\in\Lambda(n)^{2}$ there exists
$S\in\operatorname*{Sp}(n)$ such that $\ell^{\prime}=S\ell$. This is most
easily seen choosing two bases $\{e_{1},...,e_{n}\}$ and $\{e_{1}^{\prime
},...,e_{n}^{\prime}\}$ of $\ell$ and $\ell^{\prime}$, respectively and
completing these bases to symplectic bases $\{e_{1},...,e_{n};f_{1}%
,...,f_{n}\}$ and $\{e_{1}^{\prime},...,e_{n}^{\prime};f_{1}^{\prime
},...,f_{n}^{\prime}\}$ of $(\mathbb{R}^{2n},\omega)$. The automorphism $S$ of
$\mathbb{R}^{2n}$ taking the first basis to the second is in
$\operatorname*{Sp}(n)$ and we have $\ell^{\prime}=S\ell$ by construction.

Let $U(n)$ be the subgroup of $\operatorname*{Sp}(n)$ consisting of symplectic
rotations:
\[
U(n)=\operatorname*{Sp}(n)\cap O(2n,\mathbb{R}).
\]
The monomorphism $A+iB\longmapsto%
\begin{pmatrix}
A & -B\\
B & A
\end{pmatrix}
$ identifies the unitary group $U(n,\mathbb{C})$ with $U(n)$. An argument
similar to the one above shows that the action $U(n)\times\Lambda
(n)\longrightarrow\Lambda(n)$ is transitive as well; it allows to identify
$\Lambda(n)$ topologically with the homogeneous space $U(n)/O(n)$ where $O(n)$
is the subgroup of $U(n)$ consisting of the mappings $(x,p)\longmapsto
(Ax,Ap)$, $A\in O(n,\mathbb{R})$. It implies that:

\begin{lemma}
\label{Lemparameter}Let $\ell$ be a $n$-dimensional subspace of $\mathbb{R}%
^{2n}$. It is a Lagrangian plane in $(\mathbb{R}^{2n},\omega)$ if and only if
there exists real $n\times n$ matrices $A$ and $B$ satisfying $A^{T}B=B^{T}A$
and $A^{T}A+B^{T}B=I_{n\times n}$ (or $AB^{T}=BA^{T}$ and $AA^{T}%
+BB^{T}=I_{n\times n}$) such that $(x,p)\in\ell$ if and only $Ax+Bp=0$.
\end{lemma}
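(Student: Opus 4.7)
The plan is to exploit the correspondence between pairs $(A,B)$ of real $n\times n$ matrices and complex $n\times n$ matrices $A+iB$. The conditions $A^T A+B^T B=I$ and $A^T B=B^T A$ are precisely those that make $A+iB$ unitary, equivalently those that make the real block matrix $U_{A,B}:=\begin{pmatrix} A & -B \\ B & A\end{pmatrix}$ orthogonal. From $U_{A,B}^T U_{A,B}=I$ one reads off the first form of the conditions and from $U_{A,B}U_{A,B}^T=I$ the parenthetical form, so both forms are in force once either is assumed.

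For the ``if'' direction I would work directly. The identity $AA^T+BB^T=I$ forces $[A\mid B]$ to have rank $n$, so $\ell:=\{(x,p):Ax+Bp=0\}$ is an $n$-dimensional subspace. I would then exhibit the explicit parameterization $u\longmapsto(-B^T u,A^T u)$: the image lies in $\ell$ because $A(-B^T u)+B(A^T u)=(BA^T-AB^T)u=0$ by the parenthetical condition, and the map is injective since $A^T u=0$ and $B^T u=0$ together with $AA^T+BB^T=I$ force $u=0$. A short computation of $\omega$ on two such vectors collapses to $u^T(BA^T-AB^T)v$ and vanishes, so $\ell\in\Lambda(n)$.

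For the ``only if'' direction I would invoke the transitivity of $U(n)$ on $\Lambda(n)$ established just before the lemma: there exists $U=\begin{pmatrix} C & -D \\ D & C\end{pmatrix}\in U(n)$ with $U\ell_X=\ell$, whence $\ell=\{(Cu,Du):u\in\mathbb{R}^n\}$. Setting $A:=D^T$ and $B:=-C^T$, the unitarity relations for $C+iD$ translate verbatim into the required identities for $A,B$, while $A(Cu)+B(Du)=(D^T C-C^T D)u=0$ gives $\ell\subseteq\ker[A\mid B]$, with equality by dimension count. The whole argument is algebraic, and the only real obstacle is notational bookkeeping: distinguishing the $(A,B)$ that defines the plane from the $(C,D)$ that defines the unitary, and keeping transposes consistent throughout.
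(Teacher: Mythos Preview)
Your proposal is correct and follows essentially the same route as the paper: both use the transitivity of the $U(n)$-action on $\Lambda(n)$ to obtain the parametrization $x=B^{T}u$, $p=-A^{T}u$ (your $u\mapsto(-B^{T}u,A^{T}u)$ differs only by $u\mapsto -u$), and both note that $AA^{T}+BB^{T}=I$ forces $\operatorname{rank}(A,B)=n$. Your write-up is simply more explicit---separating the two implications, spelling out the direct ``if'' verification that $\omega$ vanishes, and doing the bookkeeping that converts the unitary blocks $(C,D)$ into the defining pair $(A,B)=(D^{T},-C^{T})$---whereas the paper compresses all of this into a single sentence.
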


\begin{proof}
It follows from the transitivity of the action of $U(n)$ on $\Lambda(n)$ which
allows to parametrize $\ell$ by $x=B^{T}u$ and $p=-A^{T}u$, $u\in
\mathbb{R}^{n}$ (observe that $AA^{T}+BB^{T}=I_{n\times n}$ implies that
$\operatorname*{rank}(A,B)=n$).
\end{proof}

The following classical result from symplectic geometry \cite{Birk} will be
used several times. We will say that two Lagrangian planes $\ell$ and
$\ell^{\prime}$ are transverse if $\ell\cap\ell^{\prime}=0$; equivalently
$\ell\oplus\ell^{\prime}=\mathbb{R}^{2n}$.

\begin{lemma}
\label{Lemma1}The symplectic group $\operatorname*{Sp}(n)$ acts transitively
on the set $\Lambda_{0}(n)^{2}$ of all transverse Lagrangian planes in
$(\mathbb{R}^{2n},\omega)$: if $(\ell_{1},\ell_{1}^{\prime})\in\Lambda(n)^{2}$
and $(\ell_{2},\ell_{2}^{\prime})\in\Lambda(n)^{2}$ are such that $\ell
_{1}\cap\ell_{1}^{\prime}=\ell_{2}\cap\ell_{2}^{\prime}=0$ then there exists
$S\in\operatorname*{Sp}(n)$ such that $(\ell_{1},\ell_{1}^{\prime})=S(\ell
_{2},\ell_{2}^{\prime})$.
\end{lemma}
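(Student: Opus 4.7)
The plan is to show that any pair of transverse Lagrangian planes can be moved by an element of $\operatorname{Sp}(n)$ to the standard pair $(\ell_X,\ell_P)=(\mathbb{R}^n\times 0,\,0\times\mathbb{R}^n)$; transitivity on $\Lambda_{0}(n)^{2}$ then follows by composition.

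First I would, given a transverse pair $(\ell,\ell')$, build a symplectic basis of $(\mathbb{R}^{2n},\omega)$ adapted to it. Pick any basis $\{e_{1},\dots,e_{n}\}$ of $\ell$. Because $\ell$ is Lagrangian, $\omega(e_i,e_j)=0$ for all $i,j$. The transversality $\ell\oplus\ell'=\mathbb{R}^{2n}$ combined with non-degeneracy of $\omega$ makes the map $\ell'\to\ell^{*}$, $v\mapsto\omega(\cdot,v)|_\ell$, an isomorphism (its kernel would consist of vectors of $\ell'$ in the $\omega$-orthogonal of $\ell$, i.e.\ in $\ell$ itself, hence zero). Therefore there exist uniquely determined $f_{1},\dots,f_{n}\in\ell'$ with $\omega(e_{i},f_{j})=\delta_{ij}$. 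These $f_j$ are linearly independent, span the $n$-dimensional space $\ell'$, and satisfy $\omega(f_i,f_j)=0$ because $\ell'$ is Lagrangian. Hence $\{e_{1},\dots,e_{n};f_{1},\dots,f_{n}\}$ is a symplectic basis with $\ell=\operatorname{span}(e_i)$ and $\ell'=\operatorname{span}(f_j)$.

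Next I would use this to finish transitivity. For the two given pairs $(\ell_1,\ell_1')$ and $(\ell_2,\ell_2')$ construct symplectic bases $\mathcal{B}_1=\{e_i^{(1)};f_j^{(1)}\}$ and $\mathcal{B}_2=\{e_i^{(2)};f_j^{(2)}\}$ as above. Define the linear automorphism $S:\mathbb{R}^{2n}\to\mathbb{R}^{2n}$ by $S(e_i^{(1)})=e_i^{(2)}$ and $S(f_j^{(1)})=f_j^{(2)}$. Since $S$ carries a symplectic basis to a symplectic basis, $S^{*}\omega=\omega$ on the basis and hence everywhere by bilinearity, so $S\in\operatorname{Sp}(n)$. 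By construction $S\ell_1=\ell_2$ and $S\ell_1'=\ell_2'$, giving $(\ell_2,\ell_2')=S(\ell_1,\ell_1')$.

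The only delicate point, which I would flag as the main (though still modest) obstacle, is the use of transversality to guarantee existence of the dual vectors $f_j$ inside $\ell'$ (not merely somewhere in $\mathbb{R}^{2n}$); the same construction would fail without $\ell\cap\ell'=0$. Once this is in place, the argument is parallel to the one sketched in the excerpt for the transitive action of $\operatorname{Sp}(n)$ on $\Lambda(n)$, with the extra bookkeeping that the completing vectors are forced to lie in the prescribed Lagrangian complement.
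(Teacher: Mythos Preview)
Your proof is correct and follows essentially the same route as the paper's: construct a symplectic basis adapted to each transverse pair and define $S$ as the linear map sending one symplectic basis to the other. The only difference is that you spell out explicitly, via the isomorphism $\ell'\to\ell^{*}$, $v\mapsto\omega(\cdot,v)|_{\ell}$, why the completing vectors $f_{j}$ can be chosen inside the prescribed complement $\ell'$, whereas the paper simply asserts that such a symplectic basis exists.
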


\begin{proof}
Choose a basis $\{e_{11},...,e_{1n}\}$ of $\ell_{1}$ and a basis
$\{f_{11},...,f_{1n}\}$ of $\ell_{1}^{\prime}$ such that $\{e_{1i}%
,f_{1j}\}_{1\leq i,j\leq n}$ is a symplectic basis of $(\mathbb{R}_{z}%
^{2n},\sigma)$. Similarly choose bases of $\ell_{2}$ and $\ell_{2}^{\prime}$
whose union $\{e_{2i},f_{2j}\}_{1\leq i,j\leq n}$ is also a symplectic basis.
The linear mapping $S:\mathbb{R}_{z}^{2n}\longrightarrow\mathbb{R}_{z}^{2n}$
defined by $S(e_{1i})=e_{2i}$ and $S(f_{1i})=f_{2i}$ for $1\leq i\leq n$ is in
$\operatorname*{Sp}(n)$ and $(\ell_{2},\ell_{2}^{\prime})=(S\ell_{1},S\ell
_{1}^{\prime})$.
\end{proof}

We define Lagrangian polar duality:

\begin{definition}
Let $\ell$ and $\ell^{\prime}$ be two transversal Lagrangian planes, and
$X_{\ell}$ a centrally symmetric convex body in $\ell$. The Lagrangian polar
dual $X_{\ell^{\prime}}^{o}$ of $X_{\ell}$ in $\ell^{\prime}$ is the subset of
$\ell^{\prime}$ consisting of all $z^{\prime}\in\ell^{\prime}$ such that
\begin{equation}
\omega(z,z^{\prime})\leq1\text{ \ for all \ }z^{\prime}\in X~.\label{ozz}%
\end{equation}

\end{definition}

The Lagrangian polar dual $X_{\ell^{\prime}}^{o}$ is also centrally symmetric.
In the particular case $\ell=\ell_{X}=\mathbb{R}_{x}^{n}\times0$ and
$\ell^{\prime}=\ell_{P}=0\times\mathbb{R}_{p}^{n}$ this reduces to ordinary
polarity as studied above: for $z=(x,p)\in X_{\ell}$ and $z^{\prime
}=(x^{\prime},p^{\prime})\in X_{\ell^{\prime}}^{o}$ we have indeed
\[
\omega(z,z^{\prime})=\omega((0,p;x^{\prime},0)=px^{\prime}~.
\]

Using the transitivity of the action $\operatorname*{Sp}(n)\times
\Lambda(n)\longrightarrow\Lambda(n)$ one has the following symplectic
covariance of which relates Lagrangian polar duality to ordinary polar duality:

\begin{proposition}
\label{propsycogeom}Let $S\in\operatorname*{Sp}(n)$ be such that $(\ell
,\ell^{\prime})=S(\ell_{X},\ell_{P})$. The Lagrangian polar dual
$X_{\ell^{\prime}}^{o}\subset\ell^{\prime}$ of $X_{\ell}\subset\ell$ is given
by $X_{\ell^{\prime}}^{o}=S(X^{o})$, that is
\begin{equation}
(X_{\ell},X_{\ell^{\prime}}^{o})=S(X,X^{o}) \label{sxl}%
\end{equation}
\ where $X^{o}$ is the polar dual of $X=S^{-1}(X_{\ell})\subset\ell_{X}$.
\end{proposition}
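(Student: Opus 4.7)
The plan is to exploit the two defining features of $S$: it preserves $\omega$, and it carries $(\ell_X,\ell_P)$ to $(\ell,\ell')$. Everything should reduce to a one-line substitution in the definition of Lagrangian polar duality, once the natural pairing between $\ell_X$ and $\ell_P$ is identified with the Euclidean dot product used in the classical definition of polar dual.

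First I would fix notation by writing any $z\in\ell$ as $z=Sw$ with $w\in\ell_X$ and any $z'\in\ell'$ as $z'=Sw'$ with $w'\in\ell_P$; this is legitimate because $S$ is a bijection sending $\ell_X$ onto $\ell$ and $\ell_P$ onto $\ell'$. Since $S\in\operatorname{Sp}(n)$, the symplectic invariance gives
\[
\omega(z,z')=\omega(Sw,Sw')=\omega(w,w').
\]
Next I would specialize the right-hand side to the coordinate Lagrangian pair: writing $w=(x,0)$ and $w'=(0,p')$, the computation already displayed in the paper shows $\omega(w,w')=p'\cdot x$. Hence the defining condition $\omega(z,z')\le 1$ for all $z\in X_\ell$ is equivalent to $p'\cdot x\le 1$ for all $x\in X=S^{-1}(X_\ell)$, i.e.\ to $w'\in X^o$ in the classical sense of \eqref{omo}.

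From this equivalence I would read off immediately that $z'\in X_{\ell'}^o$ if and only if $S^{-1}z'\in X^o$, which is the desired identity $X_{\ell'}^o=S(X^o)$, and then \eqref{sxl} follows by pairing with the tautology $X_\ell=S(X)$. I would briefly remark that $X^o$ is indeed well-defined as an ordinary polar dual because $X\subset\ell_X$ is centrally symmetric (being the image of the symmetric body $X_\ell$ under the linear map $S^{-1}$) and contains the origin in its interior.

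I do not anticipate a real obstacle here: the proof is essentially a change of variables driven by the symplectic invariance of $\omega$. The only point that merits a line of justification is the identification of the symplectic pairing restricted to $\ell_X\times\ell_P$ with the canonical pairing between $\mathbb{R}^n_x$ and $\mathbb{R}^n_p$, which is already implicit in the motivating discussion of the introduction and can be invoked without further calculation.
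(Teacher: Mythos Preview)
Your argument is correct and is precisely the approach the paper takes: reduce to the coordinate pair $(\ell_X,\ell_P)$ via $S$ and use the symplectic invariance $\omega(Sw,Sw')=\omega(w,w')$ to identify the Lagrangian polar dual with the ordinary one. The paper's proof is the one-sentence version of what you wrote (define $P$ by $X_{\ell'}^o=S(P)$ and observe $P=X^o$ by symplectic invariance), so your proposal simply unpacks that sentence; the only additional remark worth making is that the potential sign in $\omega((x,0),(0,p'))$ is harmless here because $X$ is centrally symmetric.
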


\begin{proof}
Let us define $P\subset0\times\mathbb{R}_{p}^{n}$ by $(X_{\ell},X_{\ell
^{\prime}}^{o})=S(X,P)$. It suffices to prove that $P=X^{o}$. But this readily
follows from the definition (\ref{ozz}) and the symplectic invariance of the
symplectic form $\omega$.
\end{proof}

This leads us to the following definition:

\begin{definition}
Let $X_{\ell}\subset\ell$ and $Y_{\ell^{\prime}}\subset\ell^{\prime}$ be
centrally symmetric convex bodies. We will say that $(X_{\ell},Y_{\ell
^{\prime}})$ is a polar dual Lagrangian pair if $X_{\ell^{\prime}}^{o}\subset
Y_{\ell^{\prime}}$. If the equality $X_{\ell^{\prime}}^{o}=Y_{\ell^{\prime}}$
holds we will say that $(X_{\ell},Y_{\ell^{\prime}})$ is an exact Lagrangian
polar dual pair.
\end{definition}

Let $S\in\operatorname*{Sp}(n)$ be such that $(\ell,\ell^{\prime})=S(\ell
_{X},\ell_{P})$; in view of Proposition \ref{propsycogeom} $(X_{\ell}%
,Y_{\ell^{\prime}})$ is a dual polar Lagrangian polar pair if and only
$(X,Y)=S^{-1}(X_{\ell},Y_{\ell^{\prime}})$ is a polar dual pair in the usual sense.

The notion of Mahler volume generalizes without difficulty to the Lagrangian
context. Recall that the Mahler volume of a centrally symmetric convex body
$X\subset\mathbb{R}_{x}^{n}$ is defined by
\[
v(X)=\operatorname*{Vol}(X\times X^{o})
\]
where $\operatorname*{Vol}$ is the standard volume in $\mathbb{R}^{2n}$. We
define accordingly
\[
v(X_{\ell})=\operatorname*{Vol}(X_{\ell}\times X_{\ell^{\prime}}^{o}).
\]
Since symplectomorphisms are volume preserving it follows that $v(X_{\ell
})=v(X)$ if $X_{\ell}\times X_{\ell^{\prime}}^{o}=S(X,X^{o}$ ) for
$S\in\operatorname*{Sp}(n)$.

\subsection{L\"{o}wner--John ellipsoids}

Let $\Omega$ be a convex body in $\mathbb{R}^{n}$. The L\"{o}wner and John
ellipsoids $\Omega_{\mathrm{L\ddot{o}wner}}$ and $\Omega_{\mathrm{John}}$ are
defined as follows \cite{ABMB,Ball}:

\begin{itemize}
\item $\Omega_{\mathrm{L\ddot{o}wner}}$ \textit{is the unique ellipsoid in}
$\mathbb{R}^{n}$ \textit{with minimum volume containing} $\Omega$.

\item $\Omega_{\mathrm{Johnr}}$ \textit{is the unique ellipsoid in}
$\mathbb{R}^{n}$ \textit{with maximum volume contained in} $\Omega$;
\end{itemize}

The L\"{o}wner and John ellipsoids are linearly covariant: if $L\in
GL(n,\mathbb{R})$ then
\begin{equation}
(L(\Omega))_{\mathrm{L\ddot{o}wner}}=L(\Omega_{\mathrm{L\ddot{o}wner}})\text{
, }(L(\Omega))_{\mathrm{John}}=L(\Omega_{\mathrm{John}}) \label{JL1}%
\end{equation}
and $\Omega_{\mathrm{L\ddot{o}wner}}$ and $\Omega_{\mathrm{John}}$ are
transformed into each other by polar duality:%
\begin{equation}
\Omega_{\mathrm{L\ddot{o}wner}}=(\Omega^{o})_{\mathrm{John}}\text{ \ ,
\ }\Omega_{\mathrm{John}}=(\Omega^{o})_{\mathrm{L\ddot{o}wner}}. \label{JL2}%
\end{equation}
We will say that $L\in GL(n,\mathbb{R})$ brings $\Omega$ in L\"{o}wner
position (\textit{resp}. John position) if $(L(\Omega))_{\mathrm{L\ddot
{o}wner}}=B^{n}(1)$ (\textit{resp}. $(L(\Omega))_{\mathrm{John}}=B^{n}(1)$).

Replacing $n$ with $2n$ we have the following basic example:

\begin{lemma}
\label{LemLoewner}The John ellipsoid of $B_{X}^{n}(1)\times B_{P}^{n}(1)$ is
$B^{2n}(1)$.
\end{lemma}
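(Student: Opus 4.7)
The plan is to exploit the large symmetry group of $\Omega = B_X^n(1)\times B_P^n(1)$, combined with the uniqueness of the John ellipsoid, to pin down its shape, and then optimize a one–parameter family.

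First I would observe that $B^{2n}(1)\subset \Omega$: if $|x|^2+|p|^2\leq 1$ then automatically $|x|\leq 1$ and $|p|\leq 1$. So $B^{2n}(1)$ is at least an admissible candidate for the John ellipsoid, and it only remains to show it is the volume maximizer. Next I would note that $\Omega$ is invariant under every transformation $(x,p)\mapsto (Ax,Bp)$ with $A,B\in O(n,\mathbb{R})$, as well as under the central symmetry $(x,p)\mapsto -(x,p)$. Since the John ellipsoid is \emph{unique}, it must be stable under each of these symmetries; in particular it is centered at the origin, so it has the form $\mathcal{E}_M=\{z\in\mathbb{R}^{2n}:Mz\cdot z\leq 1\}$ for some positive definite $2n\times 2n$ matrix $M$.

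The main step is then to show that invariance under the whole block-diagonal orthogonal group $O(n)\times O(n)$ forces $M$ to be a scalar multiple of the identity. Writing $M$ in $n\times n$ blocks $\begin{pmatrix}M_{11}&M_{12}\\M_{12}^T&M_{22}\end{pmatrix}$, the invariance condition $U^{T}MU=M$ for $U=\mathrm{diag}(A,B)$ translates into $M_{11}A=AM_{11}$, $M_{22}B=BM_{22}$, and $AM_{12}=M_{12}B$ for all $A,B\in O(n,\mathbb{R})$. Standard Schur-type reasoning (the only matrices commuting with all of $O(n)$ are scalar, and $M_{12}(B-I)=0$ for all $B\in O(n)$ forces $M_{12}=0$ since the columns of $B-I$ span $\mathbb{R}^n$) then gives $M_{11}=aI$, $M_{22}=bI$, $M_{12}=0$. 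Thus $\mathcal{E}_M=\{(x,p):a|x|^2+b|p|^2\leq 1\}$.

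Finally I would optimize: the extremal values of $|x|$ and $|p|$ on $\mathcal{E}_M$ are $a^{-1/2}$ and $b^{-1/2}$, so the containment $\mathcal{E}_M\subset \Omega$ is equivalent to $a\geq 1$ and $b\geq 1$. The volume of $\mathcal{E}_M$ equals $\mathrm{Vol}(B^{2n}(1))(ab)^{-n/2}$, which is strictly maximized over the feasible region $\{a\geq 1,\,b\geq 1\}$ at $a=b=1$. Hence the John ellipsoid is $B^{2n}(1)$. The only step requiring care is the Schur argument that rules out the off-diagonal block $M_{12}$; everything else is a direct application of uniqueness of the John ellipsoid together with a one–line optimization.
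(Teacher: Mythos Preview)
Your proof is correct and follows essentially the same strategy as the paper: exploit the symmetries of $\Omega=B_X^n(1)\times B_P^n(1)$ together with the uniqueness of the John ellipsoid to force the John ellipsoid into a small parametric family, then optimize. The only cosmetic difference is in the choice of symmetry group: the paper uses the swap $(x,p)\mapsto(p,x)$, the reflection $(x,p)\mapsto(-x,p)$, and the diagonal $O(n)$-action $(x,p)\mapsto(Hx,Hp)$ to reduce directly to a one-parameter family $B^{2n}(\lambda^{-1/2})$, whereas you use the full $O(n)\times O(n)$ block action to reach a two-parameter family $\{a|x|^2+b|p|^2\leq 1\}$ and then optimize over both parameters.
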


\begin{proof}
The inclusion
\begin{equation}
B^{2n}(1)\subset B_{X}^{n}(1)\times B_{P}^{n}(1)\label{incl}%
\end{equation}
is obvious, and we cannot have $B^{2n}(R)\subset B_{X}^{n}(1)\times B_{P}%
^{n}(1)$ if $R>1$. Assume now that the John ellipsoid $\Omega_{\mathrm{John}}$
of $\Omega=B_{X}^{n}(1)\times B_{P}^{n}(1)$ is defined by
\[
Ax\cdot x+Bx\cdot p+Cp\cdot p\leq1
\]
where $A,C>0$ and $B=B^{T}$ are real $n\times n$ matrices. Since $\Omega$ is
invariant by the transformation $(x,p)\longmapsto(p,x)$ so is $\Omega
_{\mathrm{John}}$ and we must thus have $A=C$ and $B=B^{T}$. Similarly,
$\Omega$ being invariant by the reflection $(x,p)\longmapsto(-x,p)$ we get
$B=0$ so $\Omega_{\mathrm{John}}$ is defined by $Ax\cdot x+Ap\cdot p\leq1$.
The next step is to observe that $\Omega$ and hence $\Omega_{\mathrm{John}}$
is invariant under all transformations $(x,p)\longmapsto(Hx,HP)$ where $H\in
O(n,\mathbb{R})$ so we must have $AH=HA$ for all $H\in O(n,\mathbb{R})$, but
this is only possible if $A=\lambda I_{n\times n}$ for some $\lambda
\in\mathbb{R}$. The John ellipsoid is thus of the type $B^{2n}(\lambda
^{-1/2})$ for some $\lambda\geq1$ and this concludes the proof in view of
(\ref{incl}) since $\lambda>1$ is excluded.
\end{proof}

The considerations above allow us to determine the John ellipsoid of $X_{\ell
}\times X_{\ell^{\prime}}^{o}$. Our proof relies on the following
straightforward result:

\begin{lemma}
Let $\Omega\subset\mathbb{R}^{2n}$ be a centrally symmetric body. We have
\begin{equation}
c_{\min}^{\mathrm{lin}}(\Omega)=\sup_{S\in\operatorname*{Sp}(n)}\{\pi
R^{2}:S(B^{2n}(R))\subset\Omega\}~.\label{clinmin}%
\end{equation}

\end{lemma}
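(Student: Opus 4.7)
The plan is to prove the two inequalities between (\ref{clinmin}) and the original defining supremum (\ref{clmin}). One direction is immediate: the supremum in (\ref{clinmin}) is taken over the restricted family of affine symplectic images obtained by setting $z_0=0$ in (\ref{clmin}), so it is bounded above by $c_{\min}^{\mathrm{lin}}(\Omega)$ automatically. All the work lies in showing the reverse inequality, namely that central symmetry of $\Omega$ lets us assume without loss of generality that the inscribed symplectic ball is centered at the origin.

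For the nontrivial direction, I would start from an arbitrary admissible configuration in (\ref{clmin}): assume $S\in\operatorname*{Sp}(n)$, $z_0\in\mathbb{R}^{2n}$, and $R>0$ satisfy $S(B^{2n}(z_0,R))\subset\Omega$. Rewriting $S(B^{2n}(z_0,R))=Sz_0+S(B^{2n}(R))$ and setting $z_1=Sz_0$, the hypothesis becomes
\[
z_1+S(B^{2n}(R))\subset\Omega.
\]
Now invoke the two symmetries at our disposal: central symmetry of $\Omega$ gives $\Omega=-\Omega$, and central symmetry of the ball gives $S(B^{2n}(R))=-S(B^{2n}(R))$. Applying $z\mapsto -z$ to the inclusion above, I get
\[
-z_1+S(B^{2n}(R))\subset\Omega
\]
as well.

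The key step is then a Minkowski midpoint argument exploiting convexity of $\Omega$: averaging the two inclusions yields
\[
\tfrac{1}{2}\bigl(z_1+S(B^{2n}(R))\bigr)+\tfrac{1}{2}\bigl(-z_1+S(B^{2n}(R))\bigr)\subset\Omega,
\]
and since $S(B^{2n}(R))$ is convex, the identity $\tfrac12 C+\tfrac12 C=C$ collapses the left-hand side to $S(B^{2n}(R))$. Hence $S(B^{2n}(R))\subset\Omega$ with $R$ unchanged, so every admissible radius in (\ref{clmin}) is admissible in (\ref{clinmin}). Taking suprema over $(S,R)$ produces the reverse inequality and completes the proof.

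There is no real obstacle beyond correctly invoking the convexity identity $\tfrac12 C+\tfrac12 C=C$ for a convex set $C$, and being clean about the fact that the central symmetry assumption on $\Omega$ is used essentially. It is worth flagging that convexity of $\Omega$ is implicit in the term \emph{body} as used in the paper; if one drops convexity the midpoint step fails and the two suprema need not coincide.
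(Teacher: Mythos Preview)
Your argument is correct and is essentially the same as the paper's: both use central symmetry of $\Omega$ to obtain the reflected inclusion $-z_1+S(B^{2n}(R))\subset\Omega$, and then convexity of $\Omega$ to conclude that the centered ellipsoid $S(B^{2n}(R))$ lies in $\Omega$. The paper phrases the convexity step as an interpolation $t\mapsto z-2tz_0$ between the two translated ellipsoids, while you phrase it as a Minkowski midpoint together with $\tfrac12 C+\tfrac12 C=C$; these are the same observation.
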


\begin{proof}
Since $\Omega$ is centrally symmetric we have $S(B^{2n}(z_{0},R))\subset
\Omega$ if and only if $S(B^{2n}(-z_{0},R))\subset\Omega$. The ellipsoid
$S(B^{2n}(R))$ is interpolated between $S(B^{2n}(z_{0},R))$ and $S(B^{2n}%
(-z_{0},R))$ using the mapping $t\longmapsto$ $z(t)=z-2tz_{0}$ where $z\in
S(B^{2n}(z_{0},R))$, and is hence contained in $\Omega$ by convexity.
\end{proof}

\begin{proposition}
\label{PropJohn}Let $(\ell,\ell^{\prime})$ be a pair of transverse Lagrangian
planes and $X_{\ell}\subset\ell$ a centered ellipsoid. (i) There exists
$S\in\operatorname*{Sp}(n)$ such that the John ellipsoid of $X_{\ell}\times
X_{\ell^{\prime}}^{o}$ is
\[
(X_{\ell}\times X_{\ell^{\prime}}^{o})_{\mathrm{John}}=S(B^{2n}(1)).
\]
(ii) We have
\begin{equation}
c_{\min}^{\mathrm{lin}}(X_{\ell}\times X_{\ell^{\prime}}^{o})=\pi
.\label{cmin3}%
\end{equation}

\end{proposition}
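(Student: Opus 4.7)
The plan is to reduce to the coordinate case $(\ell,\ell')=(\ell_X,\ell_P)$ via Proposition~\ref{propsycogeom}, compute the John ellipsoid in that special case using Lemma~\ref{LemLoewner}, and then transport the result symplectically. Part (ii) will then follow by combining (i) with the volume-maximality property of the John ellipsoid.

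First, by Proposition~\ref{propsycogeom} I choose $S\in\operatorname*{Sp}(n)$ with $(X_\ell,X_{\ell'}^o)=S(X,X^o)$, where $X\subset\ell_X$ is a centered ellipsoid. Writing $X=A^{-1/2}(B_X^n(1))$ for some positive definite symmetric matrix $A$, formula~(\ref{dualellh}) yields $X^o=A^{1/2}(B_P^n(1))$, so that
\[
X\times X^o \;=\; L\bigl(B_X^n(1)\times B_P^n(1)\bigr),\qquad L=\begin{pmatrix}A^{-1/2}&0\\0&A^{1/2}\end{pmatrix}.
\]
A one-line block computation gives $L^T J L = J$, so $L\in\operatorname*{Sp}(n)$. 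By Lemma~\ref{LemLoewner} the John ellipsoid of $B_X^n(1)\times B_P^n(1)$ is $B^{2n}(1)$, so two successive applications of the covariance property~(\ref{JL1}) yield
\[
(X_\ell\times X_{\ell'}^o)_{\mathrm{John}} \;=\; SL\bigl(B^{2n}(1)\bigr) \;=\; S'\bigl(B^{2n}(1)\bigr)
\]
with $S':=SL\in\operatorname*{Sp}(n)$; this proves (i).

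For (ii), the inclusion $S'(B^{2n}(1))\subset X_\ell\times X_{\ell'}^o$ combined with the lemma immediately preceding the proposition gives $c_{\min}^{\mathrm{lin}}(X_\ell\times X_{\ell'}^o)\geq\pi$. For the reverse inequality, suppose $S''(B^{2n}(R))\subset X_\ell\times X_{\ell'}^o$ for some $S''\in\operatorname*{Sp}(n)$; this exhibits an ellipsoid inscribed in $X_\ell\times X_{\ell'}^o$, whose volume cannot exceed $\operatorname*{Vol}(S'(B^{2n}(1)))$ by the defining maximum-volume property of the John ellipsoid. Since elements of $\operatorname*{Sp}(n)$ are volume preserving, this reads $R^{2n}\operatorname*{Vol}(B^{2n}(1))\leq\operatorname*{Vol}(B^{2n}(1))$, forcing $R\leq 1$ and hence $c_{\min}^{\mathrm{lin}}(X_\ell\times X_{\ell'}^o)\leq\pi$.

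The only non-routine step is checking that the block-diagonal matrix $L$ actually lies in $\operatorname*{Sp}(n)$; this is precisely what upgrades the linearly covariant Lemma~\ref{LemLoewner} to a symplectically covariant statement about $X_\ell\times X_{\ell'}^o$. A subtle point is that the upper bound in (ii) is obtained not from a direct non-squeezing argument (cylinders of capacity $\pi$ need not contain $X_\ell\times X_{\ell'}^o$ after any symplectic transformation) but from the volume-maximality of the John ellipsoid together with the volume preservation of symplectic maps; this is why (i) is established first and then used in the proof of (ii).
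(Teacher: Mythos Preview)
Your proof is correct and follows essentially the same route as the paper: reduce via Proposition~\ref{propsycogeom} to the coordinate pair $(\ell_X,\ell_P)$, write $X\times X^o$ as the image of $B_X^n(1)\times B_P^n(1)$ under the symplectic block-diagonal matrix $\begin{pmatrix}A^{-1/2}&0\\0&A^{1/2}\end{pmatrix}$, invoke Lemma~\ref{LemLoewner} together with the linear covariance~(\ref{JL1}), and then derive~(ii) from~(i) using the volume-maximality of the John ellipsoid and the fact that symplectic maps preserve volume. Your treatment of~(ii) is in fact slightly more explicit than the paper's, which compresses the volume-comparison step into the single remark that ``the largest ellipsoid contained in $X\times X^o$ is $S(B^{2n}(1))$''.
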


\begin{proof}
(i) In view of Proposition \ref{propsycogeom} there exists $S^{\prime}%
\in\operatorname*{Sp}(n)$ such that
\begin{equation}
(X_{\ell},X_{\ell^{\prime}}^{o})=S^{\prime}(X,X^{o})
\end{equation}
where $X\subset\ell_{X}$ and $X^{o}\subset\ell_{P}$. Using formula (\ref{JL1})
with $L=S$ we have
\[
(X_{\ell}\times X_{\ell^{\prime}}^{o})_{\mathrm{John}}=(S^{\prime}%
(X,X^{o}))_{\mathrm{John}}=S^{\prime}((X,X^{o})_{\mathrm{John}})
\]
so we may assume that there exists $A>0$ such that $X_{\ell}=X$ and
$X_{\ell^{\prime}}^{o}=X^{o}$ with
\begin{align}
X &  =\{x\in\mathbb{R}_{x}^{n}:Ax\cdot x\leq1\}=A^{-1/2}(B_{X}^{n}(1))\\
X^{o} &  =\{p\in\mathbb{R}_{p}^{n}:A^{-1}p\cdot p\leq1\}=A^{1/2}(B_{P}%
^{n}(1)).
\end{align}
This can be rewritten as
\[
(X,X^{o})=S^{\prime\prime}(B_{X}^{n}(1)\times B_{P}^{n}(1))
\]
where
\[
S^{\prime\prime}=%
\begin{pmatrix}
A^{-1/2} & 0\\
0 & A^{1/2}%
\end{pmatrix}
\in\operatorname*{Sp}(n)
\]
and hence
\[
(X_{\ell}\times X_{\ell^{\prime}}^{o})_{\mathrm{John}}=S^{\prime}%
S^{\prime\prime}((B_{X}^{n}(1)\times B_{P}^{n}(1))_{\mathrm{John}}\dot{)}.
\]
The claim now follows from Lemma (\ref{LemLoewner}) taking $S=S^{\prime
}S^{\prime\prime}$. (ii) It is sufficient to assume $X_{\ell}=X\subset\ell
_{X}$ and $X_{\ell^{\prime}}^{o}=X^{o}\subset\ell_{P}$. In view of formula
(\ref{clinmin}) we have to show that
\[
c_{\min}^{\mathrm{lin}}(X_{\ell}\times X_{\ell^{\prime}}^{o})=\sup
_{S\in\operatorname*{Sp}(n)}\{\pi R^{2}:S(B^{2n}(R))\subset X\times X^{o}\}=4.
\]
In view of (i) the largest ellipsoid contained in $X\times X^{o}$ is
$S(B^{2n}(1))$ for some $S\in\operatorname*{Sp}(n)$ hence
\[
c_{\min}^{\mathrm{lin}}(X\times X^{o})=c(B^{2n}(1))=\pi.
\]

\end{proof}

\section{The Main Result: Statement and Proof\label{sec2}}

We begin by recollecting some well-known facts from linear algebra; see for
instance Zhang \cite{zhang}. 

\subsection{Orthogonal projections of ellipsoids}

Consider a non-degenerate ellipsoid
\begin{equation}
\Omega=\{z\in\mathbb{R}^{2n}:Mz\cdot z\leq1\}~. \label{covell1}%
\end{equation}
where $M$ is a real positive definite symmetric $2n\times2n$ matrix (written
for short as $M>0$). We will use the block matrix representation
\begin{equation}
M=%
\begin{pmatrix}
M_{XX} & M_{XP}\\
M_{PX} & M_{PP}%
\end{pmatrix}
\label{M}%
\end{equation}
where the blocks are $n\times n$ matrices, the condition $M>0$ ensures us that
$M_{XX}>0$, $M_{PP}>0$, and $M_{PX}=M_{XP}^{T}$. The $n\times n$ matrices
\begin{align}
M/M_{PP}  &  =M_{XX}-M_{XP}M_{PP}^{-1}M_{PX}\label{schurm1}\\
M/M_{XX}  &  =M_{PP}-M_{PX}M_{XX}^{-1}M_{XP} \label{schurm2}%
\end{align}
are the Schur complements in $M$ of $M_{PP}$ and $M_{XX}$, respectively, and
we have $M/M_{PP}>0$, $M/M_{XX}>0$ (see \cite{zhang}).

\begin{lemma}
\label{LemmaProj}Let $M>0$ and consider the ellipsoid
\[
\Omega=\{z\in\mathbb{R}^{2n}:Mz\cdot z\leq1\}.
\]
The orthogonal projections
\begin{equation}
\Omega_{X}=\Pi_{X}\Omega\text{ \ },\text{ \ }\Omega_{P}=\Pi_{P}\Omega~.
\label{shadows}%
\end{equation}
of $\Omega$ onto $\ell_{X}=\mathbb{R}_{x}^{n}\oplus0$ and $\ell_{P}%
=0\oplus\mathbb{R}_{p}^{n}$, respectively, are the ellipsoids%
\begin{align}
\Omega_{X}  &  =\{x\in\mathbb{R}_{x}^{n}:(M/M_{PP})x^{2}\leq1\} \label{boundb}%
\\
\Omega_{P}  &  =\{p\in\mathbb{R}_{p}^{n}:(M/M_{XX})p^{2}\leq1\}~.
\label{bounda}%
\end{align}

\end{lemma}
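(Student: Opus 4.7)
\medskip

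\noindent\textbf{Proof plan for Lemma \ref{LemmaProj}.} The idea is that an orthogonal projection onto a coordinate subspace corresponds to \emph{eliminating} the complementary variable by minimization of the defining quadratic form, and that this minimization produces precisely a Schur complement. I will treat the projection onto $\ell_{X}$; the statement for $\Pi_{P}\Omega$ is then obtained by swapping the roles of $x$ and $p$ (or, equivalently, conjugating by the involution $(x,p)\mapsto(p,x)$, which exchanges $M_{XX}$ and $M_{PP}$).

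Writing $z=(x,p)$ and using the block decomposition \eqref{M}, the defining inequality $Mz\cdot z\leq 1$ becomes
\[
M_{XX}x\cdot x+2M_{XP}p\cdot x+M_{PP}p\cdot p\leq 1,
\]
where I have used $M_{PX}=M_{XP}^{T}$. By definition $x\in\Omega_{X}=\Pi_{X}\Omega$ iff there exists $p\in\mathbb{R}_{p}^{n}$ with $(x,p)\in\Omega$, i.e.\ iff the infimum of the left-hand side over $p\in\mathbb{R}_{p}^{n}$ is $\leq 1$. The first key step is therefore to compute this infimum. Since $M_{PP}>0$, the quadratic function of $p$ is strictly convex and attains its minimum at the unique critical point $p^{\ast}=-M_{PP}^{-1}M_{PX}x$.

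The second step is a direct substitution: plugging $p^{\ast}$ back into the quadratic form and simplifying, the cross term absorbs half of the $M_{PP}$-term and one obtains
\[
\min_{p}\bigl(M_{XX}x\cdot x+2M_{XP}p\cdot x+M_{PP}p\cdot p\bigr)=\bigl(M_{XX}-M_{XP}M_{PP}^{-1}M_{PX}\bigr)x\cdot x,
\]
which is exactly $(M/M_{PP})x\cdot x$ by \eqref{schurm1}. Hence $x\in\Omega_{X}$ iff $(M/M_{PP})x\cdot x\leq 1$, proving \eqref{boundb}. Positive-definiteness of $M/M_{PP}$ (recalled from Zhang \cite{zhang} in the paragraph preceding the lemma) ensures that $\Omega_{X}$ is indeed a non-degenerate ellipsoid.

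The third and final step is symmetry: repeating the same argument with the roles of $x$ and $p$ interchanged, or equivalently minimizing in $x$ for fixed $p$, yields \eqref{bounda} with the Schur complement $M/M_{XX}$. There is no real obstacle here; the only point that needs care is bookkeeping the block-transpose identity $M_{PX}=M_{XP}^{T}$ when completing the square, so that the mixed term collapses correctly to the Schur complement rather than to some asymmetric expression.
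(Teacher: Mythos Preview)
Your proposal is correct and follows essentially the same approach as the paper: both identify the critical $p^{\ast}=-M_{PP}^{-1}M_{PX}x$ and substitute back to obtain the Schur complement. The only cosmetic difference is that the paper phrases the critical-point condition geometrically (the normal to $\partial\Omega$ at a silhouette point lies in $\mathbb{R}_{x}^{n}\times 0$, forcing $M_{PX}x+M_{PP}p=0$), whereas you phrase it analytically as minimization over $p$; the computations are identical.
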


\begin{proof}
Let us set $Q(z)=Mz^{2}-1$; the boundary $\partial\Omega$ of the hypersurface
$Q(z)=0$ is defined by%
\begin{equation}
M_{XX}x^{2}+2M_{PX}x\cdot p+M_{PP}p^{2}=1~. \label{mabab}%
\end{equation}
We have $x\in\partial\Omega_{X}$ (the boundary of of $\Omega_{X}$) if and only
if the normal vector to $\partial\Omega$ at the point $z=(x,p)$ is parallel to
$\mathbb{R}_{x}^{n}\times0$ hence we get the constraint $\partial
_{z}Q(z)=2Mz\in\mathbb{R}_{x}^{n}\times0$; this is equivalent to saying that
$M_{PX}x+M_{PP}p=0$, that is to $p=-M_{PP}^{-1}M_{PX}x$. Inserting this value
of $p$ in the equation (\ref{mabab}) shows that $\partial\Omega_{X}$ is the
set of all $x$ such that $(M/M_{PP})x^{2}=1$, which yields (\ref{boundb}).
Formula (\ref{bounda}) is proven in a similar way.
\end{proof}

\subsection{The projection theorem}

Let us state and prove our first main result. Recall that we use the notation
$\ell_{X}=\mathbb{R}_{x}^{n}\oplus0$ \ , \ $\ell_{P}=0\oplus\mathbb{R}_{p}%
^{n}$ for the coordinate Lagrangian planes.

\begin{theorem}
\label{Thm1}Let $\Omega$ be a symmetric convex body in $(\mathbb{R}%
^{2n},\omega)$ containing a symplectic ball $S(B^{2n}(1))$. Let $(\ell
,\ell^{\prime})$ be a pair of transverse Lagrangian planes and denote by
$\Pi_{\ell}:\mathbb{R}^{2n}\longrightarrow\ell$ and $\Pi_{\ell^{\prime}%
}\mathbb{R}^{2n}\longrightarrow\ell^{\prime}$ the oblique projections in the
directions $\ell^{\prime}$ and $\ell$, respectively. Then $(\Pi_{\ell}%
\Omega,\Pi_{\ell^{\prime}}\Omega)$ is a Lagrangian dual pair
\begin{equation}
(\Pi_{\ell}\Omega)_{\ell^{\prime}}^{o}\subset\Pi_{\ell^{\prime}}%
\Omega\label{chz}%
\end{equation}
with equality if and only if $\Omega=S(B^{2n}(1))$.
\end{theorem}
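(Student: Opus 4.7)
The plan is a two-step reduction followed by a matrix computation, with the equality case tracked separately.

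First, I would reduce to the standard coordinate Lagrangian planes. By Lemma \ref{Lemma1} there exists $S_0\in\operatorname{Sp}(n)$ with $(\ell,\ell')=S_0(\ell_X,\ell_P)$; replacing $\Omega$ by $S_0^{-1}(\Omega)$ turns the oblique projections $\Pi_\ell,\Pi_{\ell'}$ into the orthogonal projections $\Pi_X,\Pi_P$, and by Proposition \ref{propsycogeom} Lagrangian polar duality becomes ordinary polar duality. Since $S_0^{-1}S(B^{2n}(1))$ is again a symplectic ball, the hypothesis is preserved. Thus it suffices to prove $(\Pi_X\Omega)^o\subset\Pi_P\Omega$ whenever $\Omega$ is a centrally symmetric convex body in $\mathbb{R}^{2n}$ containing some symplectic ball $E=S(B^{2n}(1))$.

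Next I would use monotonicity to reduce to $\Omega=E$. Since $E\subset\Omega$, projection is monotone and polar duality is anti-monotone, yielding the chain
\[
(\Pi_X\Omega)^o\subset(\Pi_X E)^o\subset\Pi_P E\subset\Pi_P\Omega,
\]
provided the middle inclusion holds. So the whole problem collapses to verifying $(\Pi_X E)^o\subset\Pi_P E$ for an arbitrary symplectic ball $E$.

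For the core computation, I would write $S=\begin{pmatrix}A&B\\C&D\end{pmatrix}$ and apply Lemma \ref{LemmaProj} to $M=(SS^T)^{-1}$; the Schur-complement identity $(M^{-1})_{XX}=(M/M_{PP})^{-1}$ and its analogue for $M/M_{XX}$ identify $\Pi_X E=\alpha^{1/2}(B^n(1))$ and $\Pi_P E=\beta^{1/2}(B^n(1))$ with $\alpha=AA^T+BB^T$, $\beta=CC^T+DD^T$. By Proposition \ref{propell} the target inclusion is equivalent to the operator inequality $\beta\geq\alpha^{-1}$. To obtain it I would expand the symplectic identity $(SS^T)\,J\,(SS^T)=J$ block-wise. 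Setting $\gamma=AC^T+BD^T$, the off-diagonal block gives $\alpha\gamma^T=\gamma\alpha$ (so that $\widetilde{\gamma}:=\alpha^{-1/2}\gamma\alpha^{1/2}$ is symmetric) while the diagonal block gives $\alpha\beta=I+\gamma^{2}$. Together they yield
\[
\alpha^{1/2}(\beta-\alpha^{-1})\alpha^{1/2}=\widetilde{\gamma}^{\,2}\geq 0,
\]
which is exactly the inequality we need. I expect this algebraic identity, specifically the observation that the conjugate $\widetilde\gamma$ is symmetric so that its square is automatically positive semidefinite, to be the conceptual crux of the proof.

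Finally, for the equality case, I would trace when the three inclusions displayed above are simultaneously equalities. The middle one is an equality iff $\widetilde\gamma=0$, equivalently $\gamma=AC^T+BD^T=0$, which is to say $SS^T$ is block-diagonal with respect to the splitting $\ell_X\oplus\ell_P$; the outer two are equalities iff $\Pi_X\Omega=\Pi_X E$ and $\Pi_P\Omega=\Pi_P E$, which combined with $E\subset\Omega$ and the symmetric-convex structure will force $\Omega=E$. Unravelling the initial reduction by $S_0$, this identifies the equality case with $\Omega$ being a symplectic ball in the canonical position adapted to the pair $(\ell,\ell')$. I anticipate this last step to require the most care, since one must verify that the constraint ``same projections onto both coordinate Lagrangian planes plus symplectic-ball block structure'' really does pin $\Omega$ down exactly.
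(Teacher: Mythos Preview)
For the inclusion $(\Pi_\ell\Omega)_{\ell'}^o\subset\Pi_{\ell'}\Omega$ your argument is essentially the paper's: both reduce to the coordinate pair $(\ell_X,\ell_P)$ via Lemma~\ref{Lemma1} and Proposition~\ref{propsycogeom}, both use monotonicity of projections and anti-monotonicity of polarity (packaged in the paper as Lemma~\ref{LemmaYQ}) to reduce to the case $\Omega=E=S(B^{2n}(1))$, and both finish with a block-matrix computation based on the relation $NJN=J$ for a positive symplectic matrix, the crux being that a suitable conjugate of the off-diagonal block is symmetric so that its square is automatically positive semidefinite. The only difference is cosmetic: the paper takes $N=M=(SS^T)^{-1}$ and computes the Schur complements $M/M_{PP}$, $M/M_{XX}$ directly, whereas you take $N=SS^T$ and invoke the inversion identity $(M^{-1})_{XX}=(M/M_{PP})^{-1}$ to read the projections off as the blocks $\alpha,\beta$ of $SS^T$. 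The two computations are isomorphic.

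Your treatment of the equality case, however, has a genuine gap. The assertion that ``$E\subset\Omega$ together with $\Pi_X\Omega=\Pi_X E$, $\Pi_P\Omega=\Pi_P E$ and the symmetric-convex structure will force $\Omega=E$'' is false. Take $E=B^{2n}(1)$ (so $S=I$, and certainly $\gamma=0$) and $\Omega=B^n_X(1)\times B^n_P(1)$: then $E\subset\Omega$, both sets project to $B^n_X(1)$ and to $B^n_P(1)$, yet $\Omega\neq E$. In fact this very $\Omega$ satisfies $(\Pi_X\Omega)^o=B^n_P(1)=\Pi_P\Omega$ without being a symplectic ball, so no argument along the lines you sketch can establish the equality characterisation as written. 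The paper itself disposes of this case in a single sentence by appeal to Proposition~\ref{PropJohn}, without spelling out how; your instinct that this step ``requires the most care'' is well founded, but the route you propose does not close.
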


\begin{proof}
\textbf{First case:} we choose $\ell=\ell_{X}$ and $\ell^{\prime}=\ell_{P}$.
Let $S\in\operatorname*{Sp}(n)$ be such that $S(B^{2n}(1))\subset\Omega$ (such
an $S$ is generally not unique). The ellipsoid $S(B^{2n}(1))$ is the set of
all $z\in\mathbb{R}^{2n}$ such that $Mz^{2}\leq1$ where $M=(S^{T})^{-1}S^{-1}$
is a positive definite symplectic matrix; writing $M$ in block-form%
\[
M=%
\begin{pmatrix}
M_{XX} & M_{XP}\\
M_{PX} & M_{PP}%
\end{pmatrix}
\]
we have $M_{XX}>0$, $M_{PP}>0$, $M_{PX}=M_{XP}^{T}$. In view of the projection
Lemma \ref{LemmaProj}
\begin{align}
\Omega_{X} &  =\Pi_{X}\Omega=\{x\in\mathbb{R}_{x}^{n}:(M/M_{PP})x\cdot
x\leq1\}\\
\Omega_{P} &  =\Pi_{P}\Omega=\{p\in\mathbb{R}_{p}^{n}:(M/M_{XX})p\cdot
p\leq1\}
\end{align}
where
\begin{align}
M/M_{PP} &  =M_{XX}-M_{XP}M_{PP}^{-1}M_{PX}\\
M/M_{XX} &  =M_{PP}-M_{PX}M_{XX}^{-1}M_{XP}%
\end{align}
In view of formula (\ref{dualellh})) we have
\[
\Omega_{X}^{o}=\{p:(M/M_{PP})^{-1}p^{2}\leq1\}.
\]
Let us prove that $\Omega_{X}^{o}\subset$ $\Omega_{P}$; the result will then
follow by Lemma \ref{LemmaYQ}. The condition $\Omega_{X}^{o}\subset$
$\Omega_{P}$ is equivalent to
\begin{equation}
(M/M_{PP})^{-1}\geq M/M_{XX}~;\label{RPPXX}%
\end{equation}
let us show that this is the case here. The conditions $M\in\operatorname*{Sp}%
(n)$ and $M=M^{T}$ being equivalent to $MJM=J$ we have the following relations
between the blocks:%
\begin{gather}
M_{XX}M_{PX}=M_{XP}M_{XX}\text{ , \ }M_{PX}M_{PP}=M_{PP}M_{XP}\label{rxx}\\
M_{XX}M_{PP}-M_{XP}^{2}=I_{n\times n}~.\label{rpp}%
\end{gather}
These equalities\ imply that
\begin{align}
M/M_{PP} &  =(M_{XX}M_{PP}-M_{XP}^{2})M_{PP}^{-1}=M_{PP}^{-1}\label{34}\\
M/M_{XX} &  =M_{XX}^{-1}(M_{XX}M_{PP}-M_{XP}^{2})=M_{XX}^{-1}\label{35}%
\end{align}
and hence the inequality (\ref{RPPXX}) holds if and only if $M_{XX}M_{PP}\geq
I_{n\times n}$. Now $M_{XX}M_{PP}=I_{n\times n}+M_{XP}^{2}$ in view of
(\ref{rpp}) hence we will have $M_{XX}M_{PP}\geq I_{n\times n}$ if $M_{XP}%
^{2}\geq0$. To prove that $M_{XP}^{2}\geq0$ it suffices to show that the
eigenvalues of $M_{XP}$ are real. In view of the first formula (\ref{rxx}) we
have $M_{XX}^{1/2}M_{PX}M_{XX}^{-1/2}=M_{XX}^{-1/2}M_{XP}M_{XX}^{1/2}$ hence
$M_{XX}^{-1/2}M_{XP}M_{XX}^{1/2}$ is symmetric and its eigenvalues are real as
claimed. \textbf{General case}\textit{:} Choose $S\in\operatorname*{Sp}(n)$
such that $(\ell,\ell^{\prime})=S(\ell_{X},\ell_{P})$, and consider the
centrally symmetric convex body $S^{-1}(\Omega)$. In view of what has been
proven above the orthogonal projections $\Pi_{X}(S^{-1}(\Omega))$ and $\Pi
_{P}(S^{-1}(\Omega))$ form a dual polar pair in the usual sense. In view of
Proposition \ref{propsycogeom} the sets $S\Pi_{X}(S^{-1}(\Omega))\subset\ell$
and $S\Pi_{P}(S^{-1}(\Omega))\subset\ell^{\prime}$ form a Lagrangian dual
pair. Now $\Pi_{\ell}=S\Pi_{X}S^{-1}$ satisfies $\Pi_{\ell}^{2}=\Pi_{\ell}$
and $\ker(\Pi_{\ell})=S\ell_{P}=\ell^{\prime}$ and similarly \ $\Pi
_{\ell^{\prime}}^{2}=\Pi_{\ell^{\prime}}$ and $\ker(\Pi_{\ell^{\prime}}%
)=S\ell_{X}=\ell$ hence $\Pi_{\ell}$ and $\Pi_{\ell^{\prime}}$ are indeed the
projections described in the statement of the theorem. There remains to study
the case $(\Pi_{\ell}\Omega)_{\ell^{\prime}}^{o}=\Pi_{\ell^{\prime}}\Omega$.
The statement follows from Proposition \ref{PropJohn}. 
\end{proof}

The following consequence is easy:

\begin{corollary}
\label{Lemmaconvex}Let $\Omega$ be a centrally symmetric convex body in
$(\mathbb{R}^{2n},\omega)$ such that $c_{\min}^{\mathrm{lin}}(\Omega)\geq\pi$.
Then $(\Pi_{\ell}\Omega,\Pi_{\ell^{\prime}}\Omega)$ is a Lagrangian dual pair
and the formulas (\ref{chz}) hold.
\end{corollary}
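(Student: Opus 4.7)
The plan is to deduce the corollary from Theorem \ref{Thm1} by a scaling-and-limit argument, since the hypothesis $c_{\min}^{\mathrm{lin}}(\Omega)\geq\pi$ does not outright guarantee that $\Omega$ contains a symplectic ball of radius exactly one, but only of radii arbitrarily close to one from below.

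First I would unpack the hypothesis using formula (\ref{clinmin}): $c_{\min}^{\mathrm{lin}}(\Omega)\geq\pi$ means that for every $R<1$ there exists $S_{R}\in\operatorname{Sp}(n)$ with $S_{R}(B^{2n}(R))\subset\Omega$. Fixing such $R$ and $S_R$, the rescaled body $\Omega/R$ is centrally symmetric, convex, and contains $S_{R}(B^{2n}(1))$, so Theorem \ref{Thm1} applies directly to $\Omega/R$ and yields
\[
\bigl(\Pi_{\ell}(\Omega/R)\bigr)^{o}_{\ell^{\prime}}\subset \Pi_{\ell^{\prime}}(\Omega/R).
\]

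Next I would translate this back to $\Omega$. The oblique projections $\Pi_{\ell}$ and $\Pi_{\ell^{\prime}}$ are linear, so $\Pi_{\ell}(\Omega/R)=(\Pi_{\ell}\Omega)/R$ and likewise for $\ell^{\prime}$. The scaling behavior of Lagrangian polar duality (immediate from the definition (\ref{ozz}) and the bilinearity of $\omega$) gives $(\lambda X_{\ell})^{o}_{\ell^{\prime}}=\lambda^{-1}X_{\ell^{\prime}}^{o}$ for every $\lambda>0$, so the displayed inclusion above becomes
\[
R\,(\Pi_{\ell}\Omega)^{o}_{\ell^{\prime}}\subset \frac{1}{R}\,\Pi_{\ell^{\prime}}\Omega,\qquad\text{i.e.,}\qquad R^{2}\,(\Pi_{\ell}\Omega)^{o}_{\ell^{\prime}}\subset \Pi_{\ell^{\prime}}\Omega.
\]

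Finally, I would take $R\to 1^{-}$. For any $z^{\prime}\in(\Pi_{\ell}\Omega)^{o}_{\ell^{\prime}}$ the point $R^{2}z^{\prime}$ lies in $\Pi_{\ell^{\prime}}\Omega$ for every $R<1$; since $\Pi_{\ell^{\prime}}\Omega$ is the continuous image of the compact set $\Omega$ under a linear map, it is closed, so letting $R\to 1$ yields $z^{\prime}\in\Pi_{\ell^{\prime}}\Omega$, which is exactly (\ref{chz}). The only real obstacle is the book-keeping in the limit step: one must verify that Lagrangian polar duality scales as $(\lambda X)^{o}=\lambda^{-1}X^{o}$ and that $\Pi_{\ell^{\prime}}\Omega$ is closed so that the non-attained supremum defining $c_{\min}^{\mathrm{lin}}$ causes no trouble; both points are essentially formal.
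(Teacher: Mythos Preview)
Your argument is correct and follows the same route as the paper: reduce to Theorem~\ref{Thm1} via formula~(\ref{clinmin}). The paper's proof is the one-liner ``The result follows from Theorem~\ref{Thm1} in view of formula~(\ref{clinmin})'', so your scaling-and-limit argument is in fact a more careful version of what the paper leaves implicit, handling the possibility that the supremum defining $c_{\min}^{\mathrm{lin}}$ is not attained.
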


\begin{proof}
The result follows from Theorem \ref{Thm1} in view of formula (\ref{clinmin}%
).  
\end{proof}

\subsection{A partial converse}

Let us address the following problem: given two transverse Lagrangian planes
$\ell$ and $\ell^{\prime}$ in $(\mathbb{R}^{2n},\omega)$ and a Lagrangian
polar dual pair $(X_{\ell},Y_{\ell^{\prime}})$ ($X_{\ell}\subset\ell$ and
$X_{\ell^{\prime}}^{o}\subset Y_{\ell^{\prime}}\subset\ell^{\prime}$), can we
find a (centrally symmetric) convex body $\Omega\subset\mathbb{R}^{2n}$ such
that (with the notation of Theorem \ref{Thm1}) $X_{\ell}=\Pi_{\ell}\Omega$ and
$Y_{\ell^{\prime}}=\Pi_{\ell^{\prime}}\Omega$ ? It is intuitively clear that
the problem has in general infinitely many solutions. To make things more
visible, consider the case $n=1$ where we choose $\ell=\ell_{X}$ (the
\textquotedblleft$x$-axis\textquotedblright) and $\ell=\ell_{P}$ (the
\textquotedblleft$p$-axis\textquotedblright). Choose $X=[-a,a]$ so that
$X^{o}=[-1/a,1/a]$. Any centered ellipse $\Omega$ inscribed in the rectangle
$X\times X^{o}$ (which has area 4) will have orthogonal projections $X$ and
$X^{o}$ on the $x$ and $p$ axes. However, If we require the area of this
ellipse to be $c(\Omega)=\pi$ then the solution is unique, and $\Omega$ is the
ellipse defined by $x^{2}/a^{2}+a^{2}p^{2}\leq1$, which is the John ellipse of
$X\times X^{o}$. This discussion can be extended to the case of arbitrary
dimension $n$:

\begin{theorem}
\label{Thm2}Let $(\ell,\ell^{\prime})$ be a pair of transverse Lagrangian
planes in $(\mathbb{R}^{2n},\omega)$ and $X_{\ell}\subset\ell$ a centered
ellipsoid. Let $X_{\ell^{\prime}}^{o}$ be the Lagrangian dual of $X_{\ell}$.
There exists a unique symplectic ball $\Omega=S(B^{2n}(1))$ ($S\in
\operatorname*{Sp}(n)$) having projections $\Pi_{\ell}\Omega=X_{\ell}$ and
$,\Pi_{\ell^{\prime}}\Omega=X_{\ell^{\prime}}^{o}$.
\end{theorem}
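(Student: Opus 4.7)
The plan is to exploit symplectic covariance to reduce to the standard case $(\ell,\ell^{\prime})=(\ell_{X},\ell_{P})$, write down an explicit symplectic ball that works, and then push the algebraic constraints from the proof of Theorem \ref{Thm1} a little further to obtain uniqueness.

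For existence, I pick $S_{0}\in\operatorname{Sp}(n)$ with $(\ell,\ell^{\prime})=S_{0}(\ell_{X},\ell_{P})$, which exists by Lemma \ref{Lemma1}. By Proposition \ref{propsycogeom}, $(X_{\ell},X_{\ell^{\prime}}^{o})=S_{0}(X,X^{o})$ where $X=\{x:Ax\cdot x\leq 1\}\subset\ell_{X}$ for some $A>0$ and $X^{o}=\{p:A^{-1}p\cdot p\leq 1\}\subset\ell_{P}$. The symplectic matrix $S_{A}=\begin{pmatrix}A^{-1/2} & 0\\ 0 & A^{1/2}\end{pmatrix}$ sends $B^{2n}(1)$ to the ellipsoid defined by $M=\begin{pmatrix}A & 0\\ 0 & A^{-1}\end{pmatrix}$, whose orthogonal (and here also oblique) projections onto $\ell_{X}$ and $\ell_{P}$ are, by Lemma \ref{LemmaProj}, exactly $X$ and $X^{o}$. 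Taking $\Omega=S_{0}S_{A}(B^{2n}(1))$ and using $\Pi_{\ell}=S_{0}\Pi_{X}S_{0}^{-1}$, $\Pi_{\ell^{\prime}}=S_{0}\Pi_{P}S_{0}^{-1}$ as in Theorem \ref{Thm1} then gives $\Pi_{\ell}\Omega=X_{\ell}$ and $\Pi_{\ell^{\prime}}\Omega=X_{\ell^{\prime}}^{o}$.

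For uniqueness, suppose $\Omega^{\prime}=S^{\prime}(B^{2n}(1))$ also has the prescribed projections; after pulling back by $S_{0}$ I again reduce to $(\ell_{X},\ell_{P})$ and write $\Omega^{\prime}=\{z:M^{\prime}z\cdot z\leq 1\}$. The matrix $M^{\prime}=(S^{\prime}(S^{\prime})^{T})^{-1}$ is both symmetric positive definite and symplectic, so identities (\ref{34}) and (\ref{35}) give $M^{\prime}/M_{PP}^{\prime}=(M_{PP}^{\prime})^{-1}$ and $M^{\prime}/M_{XX}^{\prime}=(M_{XX}^{\prime})^{-1}$; combined with Lemma \ref{LemmaProj} applied to the two projection hypotheses, this forces $M_{XX}^{\prime}=A$ and $M_{PP}^{\prime}=A^{-1}$. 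Identity (\ref{rpp}) now yields $(M_{XP}^{\prime})^{2}=0$, and the commutation relation (\ref{rxx}) shows that $L=A^{-1/2}M_{XP}^{\prime}A^{1/2}$ is symmetric with $L^{2}=0$, hence $L=0$ and $M_{XP}^{\prime}=0$. Therefore $M^{\prime}$ coincides with the diagonal $M$ constructed above, so $\Omega^{\prime}=\Omega$.

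The genuine obstacle is the last uniqueness step: the projection data pin down the diagonal blocks of $M^{\prime}$ and yield $(M_{XP}^{\prime})^{2}=0$, but a priori $M_{XP}^{\prime}$ could still be a nonzero nilpotent matrix. The symmetrization $A^{-1/2}(\,\cdot\,)A^{1/2}$ borrowed from the proof of Theorem \ref{Thm1} resolves this precisely because the same commutation relation that makes $M_{XP}^{\prime}$ similar to a symmetric matrix also forces a symmetric nilpotent to vanish.
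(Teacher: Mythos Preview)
Your proof is correct and follows essentially the same route as the paper: reduce to the coordinate Lagrangian pair via Lemma \ref{Lemma1} and Proposition \ref{propsycogeom}, exhibit the block-diagonal symplectic matrix $M=\begin{pmatrix}A&0\\0&A^{-1}\end{pmatrix}$ for existence, and for uniqueness use the symplectic relations on $M'$ to force $M_{XX}'=A$, $M_{PP}'=A^{-1}$, $(M_{XP}')^{2}=0$, and then kill $M_{XP}'$. The only cosmetic difference is the last step: the paper multiplies $M_{XX}'M_{PX}'=M_{XP}'M_{XX}'$ on the left by $M_{XP}'$ to obtain $M_{XP}'M_{XX}'M_{XP}'^{T}=0$ and concludes from $M_{XX}'>0$, whereas you conjugate by $A^{-1/2}$ to produce a symmetric nilpotent---both arguments exploit the same commutation relation and are equivalent.
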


\begin{proof}
\textbf{First case}: Assume that $\ell$ and $\ell^{\prime}$ are the coordinate
planes $\ell_{X}$ and $\ell_{P}$ and that $X_{\ell}$ and $X_{\ell^{\prime}%
}^{o}$ are the ellipsoids%
\begin{align}
X &  =\{x\in\mathbb{R}_{x}^{n}:Ax\cdot x\leq1\}\label{exc1}\\
X^{o} &  =\{p\in\mathbb{R}_{p}^{n}:A^{-1}p\cdot p\leq1\}.\label{exc2}%
\end{align}
Let us show that there exists a symplectic ball $\Omega=S(B^{2n}(1))$ having
orthogonal projections $X$ and $X^{o}$. Setting $M=(SS^{T})^{-1}$ the
ellipsoid $S(B^{2n}(1))$ is the set
\[
\Omega=\{z\in\mathbb{R}^{2n}:Mz\cdot z\leq1\}~.
\]
Let us write $M$ in block-matrix form, as in the proof of Theorem \ref{Thm1}:%
\[
M=%
\begin{pmatrix}
M_{XX} & M_{XP}\\
M_{PX} & M_{PP}%
\end{pmatrix}
.
\]
In view of Lemma \ref{LemmaProj} the orthogonal projections $\Pi_{X}\Omega$
and $\Pi_{P}\Omega$ of $\Omega$ on $\ell_{X}$ and $\ell_{P}$ are the
ellipsoids%
\begin{align}
\Omega_{X} &  =\{x\in\mathbb{R}_{x}^{n}:(M/M_{PP})x\cdot x\leq1\}\\
\Omega_{P} &  =\{p\in\mathbb{R}_{p}^{n}:(M/M_{XX})p\cdot p\leq1\dot{\}}.
\end{align}
The conditions $M\in\operatorname*{Sp}(n)$, $M=M^{T}$ imply that we have
$MJM=J$ hence the following set of relations must be satisfied by the
block-entries of $M$:%
\begin{gather}
M_{XX}M_{PP}-M_{XP}^{2}=I_{n\times n}\label{mc1}\\
M_{XX}M_{PX}=M_{XP}M_{XX}\label{mc2}\\
M_{PX}M_{PP}=M_{PP}M_{XP}~.\label{mc3}%
\end{gather}
Using the identities (\ref{mc2}) and (\ref{mc3}) we get
\begin{align*}
M/M_{PP} &  =M_{XX}-M_{PP}^{-1}M_{PX}^{2}\\
&  =M_{PP}^{-1}(M_{PP}M_{XX}-M_{PX}^{2})\\
&  =M_{PP}^{-1}%
\end{align*}
the last equality being obtained by transposition of the identity (\ref{mc1}).
A similar calculation leads to the formula%
\[
M/M_{XX}=M_{PP}-M_{XX}^{-1}M_{XP}^{2}=M_{XX}^{-1}%
\]
and comparing with (\ref{exc1}) and (\ref{exc2}) we must thus have
$A=M_{XX}=M_{PP}^{-1}$. From the identity (\ref{mc1}) follows that $M_{XP}%
^{2}=0$; multiplying both sides of the identity (\ref{mc2}) on the left by
$M_{XP}$ then implies that $M_{PX}^{T}M_{XX}M_{PX}=M_{XP}M_{XX}M_{PX}=0$ and
hence $M_{XP}=0$ since $M_{XX}$ is invertible. The symplectic matrix $M$ must
thus be the block diagonal matrix
\begin{equation}
M=%
\begin{pmatrix}
A & 0\\
0 & A^{-1}%
\end{pmatrix}
\label{MA}%
\end{equation}
and $\Omega$ is hence the ellipsoid
\begin{equation}
\Omega=\{(x,p)\in\mathbb{R}^{n}\times\mathbb{R}^{n}:Ax\cdot x+A^{-1}%
p\cdot\cdot p\leq1\}.\label{omega1}%
\end{equation}
The latter is the John ellipsoid of $(X,X^{o})$ in view of Lemma
\ref{LemLoewner}. \textbf{Second case}: It is similar to the proof given at
the end of Theorem \ref{Thm1}: choose $S^{\prime}\in\operatorname*{Sp}(n)$
such that $(\ell,\ell^{\prime})=S^{\prime}(\ell_{X},\ell_{P})$ and consider
the ellipsoid
\[
\Omega^{\prime}=S^{\prime}\Omega=S^{\prime}S(B^{2n}(1)))
\]
($S$ and $\Omega$ as above). The projections $\Pi_{\ell}$ and $,\Pi
_{\ell^{\prime}}$ are given by $\Pi_{\ell}=S^{\prime}\Pi_{X}S^{\prime-1}$ and
$\Pi_{\ell^{\prime}}=S^{\prime}\Pi_{P}S^{\prime-1}$ hence%
\[
(\Pi_{\ell}\Omega^{\prime},\Pi_{\ell^{\prime}}\Omega^{\prime})=S^{\prime}%
(\Pi_{X}\Omega,\Pi_{P}\Omega)
\]
is a Lagrangian dual pair. The uniqueness of $\Omega^{\prime}$ follows using
the linear covariance (\ref{JL1}) of the John ellipsoid. Formula (\ref{min1})
is obvious by definition of $c_{\min}^{\mathrm{lin}}$ and the fact that if
$S(B^{2n}(R))\subset(X\times X^{o})$ for some $S\in\operatorname*{Sp}(n)$ then
we must have $R\leq1$ by definition of the John ellipsoid.
\end{proof}

\begin{remark}
The fact that $X$ and $X^{o}$ are the orthogonal projections on the coordinate
Lagrangian planes of the ellipsoid (\ref{omega1}) is obvious and might lead to
think that the rather long argument developed in the proof is superfluous;
what is not obvious in the multi-dimensional case is the uniqueness of an
ellipsoid having this property.
\end{remark}

\section{The Uncertainty Principle of Quantum Mechanics\label{secup}}

\subsection{Uncertainties}

The term \textquotedblleft uncertainty principle\textquotedblright\ commonly
hints in mathematics at a constellation of related results based on the
observation of that there is a kind of \textquotedblleft
trade-off\textquotedblright\ between a function $\psi$ and its Fourier
transform $F\psi$ which prevents them to be simultaneously too sharply
located. Perhaps the most emblematic of these results (and one of the oldest)
is Hardy's \cite{ha32} uncertainty principle which says that if $\psi
(x)=\mathcal{O}(e^{-ax^{2}/2})$ and $F\psi(p)=\mathcal{O}(e^{-ap^{2}/2})$ as
$x,p\rightarrow\infty$ then we must have $ab\leq1$, with equality if and only
if $\psi(x)=\mathcal{C}e^{-ax^{2}/2}$ for some constant $C$. In practice, the
best known uncertainty principles are those arising from quantum mechanics,
and are variations on the theme of the Heisenberg principle of indeterminacy
$\sigma_{xx}\sigma_{pp}\geq\frac{1}{4}\hbar^{2}$. The most used of these
results is, with no doubt, the Robertson--Schr\"{o}dinger inequality%
\[
\sigma_{xx}\sigma_{pp}\geq\sigma_{xp}^{2}+\tfrac{1}{4}\hbar^{2}%
\]
where $\sigma_{xx}$ and $\sigma_{pp}$ are variances (also called standard
deviations) and $\sigma_{xp}$ is the covariance corresponding to joint
position and momentum measurements. As we already observed in \cite{gopolar}
the rub with these types of inequalities, generalizing Heisenberg's principle,
is that they crucially depend on the choice of a privileged (and arbitrary)
way of measuring uncertainties, in his case the usual (co-)variances from
standard statistics. In fact, Hilgevoord \cite{hi02} already noticed some time
ago (also see the follow-up \cite{hiuf} with Uffink), that variances and
covariances give good measurements of the spread only for Gaussian (or almost
Gaussian) quantum states, this being due to the fact they usually fail to take
into account the \textquotedblleft tails\textquotedblright\ of $\psi$ which
can be quite large and influence the variance and covariance.

We are going to see that the notion of dual polarity is a very intuitive and
general tool for expressing uncertainties, which avoids the pitfalls due to
the use of standard statistical tools. We will thereafter show that Hardy's
uncertainty principle can also be interpreted in a convincing way using polar duality.

We will from now on use units in which $\hslash=1$.

\subsection{The Robertson--Schr\"{o}dinger inequalities}

Let $\mathcal{B}=\mathcal{B}(L^{2}(\mathbb{R}^{n}))$ be the algebra of bounded
operators on $L^{2}(\mathbb{R}^{n})$. We denote by $\mathcal{L}_{1}%
=\mathcal{L}_{1}(L^{2}(\mathbb{R}^{n}))$ the two-sided ideal of $\mathcal{B}$
consisting of all trace-class operators on $L^{2}(\mathbb{R}^{n})$. An
operator $\widehat{\varrho}\in\mathcal{B}(L^{2}(\mathbb{R}^{n}))$ is called a
\textit{density operator} if it is positive semi-definite: $\widehat{\varrho
}\geq0$ (and hence self-adjoint) and if it has trace $\operatorname*{Tr}%
(\widehat{\varrho})=1$. It follows from the spectral theorem that there exists
an orthonormal sequence of vectors $(\psi_{j})_{j\in\mathbb{N}}$ in
$L^{2}(\mathbb{R}^{n})$ and a corresponding sequence $(\lambda_{j}%
)_{j\in\mathbb{N}}$ of non-negative real numbers summing up to one such that
for every $\psi\in L^{2}(\mathbb{R}^{n})$%
\begin{equation}
\widehat{\varrho}\psi=\sum_{j\in\mathbb{N}}\lambda_{j}(\psi|\psi_{j})_{L^{2}%
}\psi_{j}. \label{spectral}%
\end{equation}
One proves \cite{Birkbis} that the Weyl symbol of $\widehat{\varrho}$ is the
function (\textquotedblleft Wigner distribution\textquotedblright)
\begin{equation}
\rho=(2\pi)^{n}\sum_{j\in\mathbb{N}}\lambda_{j}W\psi_{j} \label{specwig}%
\end{equation}
where $W\psi_{j}$ is the Wigner transform of $\psi_{j}$; recall
\cite{Birkbis,Wigner} that for $\psi\in L^{2}(\mathbb{R}^{n})$ the function
$W\psi$ is given by the absolutely convergent integral%
\begin{equation}
W\psi(x,p)=(2\pi)^{-n}\int_{\mathbb{R}^{n}}e^{-ip\cdot y}\psi(x+\tfrac{1}%
{2}y)\overline{\psi(x-\tfrac{1}{2}y)}dy. \label{Wig}%
\end{equation}
In order to define the covariance matrix of $\widehat{\varrho}$ we need to
consider density operators whose Wigner distribution decreases sufficiently
well at infinity: \cite{Birkbis,fe06,Jakobsen,Gro}:

\begin{definition}
We will say that $\widehat{\varrho}$ has the Feichtinger property for $s\geq0$
if the functions $\psi_{j}$ in the spectral decomposition (\ref{spectral}) of
$A$ satisfy
\begin{equation}
W(\psi_{j},\phi)\in L_{s}^{1}\mathbb{R}^{2n})\text{ \ },\text{ \ }%
j\in\mathbb{N} \label{7}%
\end{equation}
for one (and hence every) every $\phi\in\mathcal{S}(\mathbb{R}^{n})$.
\end{definition}

In formula (\ref{7}) $W(\psi,\phi)$ is the cross-Wigner transform, defined by
the integral
\begin{equation}
W(\psi,\phi)(z)=\left(  \tfrac{1}{2\pi}\right)  ^{n}\int_{\mathbb{R}^{n}%
}e^{-ip\cdot y}\psi(x+\tfrac{1}{2}y)\overline{\phi(x-\tfrac{1}{2}y)}dy
\end{equation}
and $L_{s}^{1}\mathbb{R}^{2n})$ is the weighted $L^{1}$-space%
\begin{equation}
L_{s}^{1}(\mathbb{R}^{2n})=\{a:\mathbb{R}^{2n}\longrightarrow\mathbb{C}\text{
},\text{ }\langle z\rangle^{s}a\in L^{1}(\mathbb{R}^{2n})\} \label{6}%
\end{equation}
where $\langle z\rangle=(1+|z|^{2})^{1/2}$. It is easy to show \cite{CMQR}
that if $\rho$ has the Feichtinger property for some $s\geq2$ then $\rho\in
L^{1}(\mathbb{R}^{2n})$ and the symmetric $2n\times2n$ matrix
\begin{equation}
\Sigma=\int\nolimits_{\mathbb{R}^{2n}}(z-\overline{z})(z-\overline{z}%
)^{T}a(z)dz \label{vectorform}%
\end{equation}
where $\overline{z}=\int\nolimits_{\mathbb{R}^{2n}}z\rho(z)dz$. is defined
(both integrals being absolutely convergent). This matrix $\Sigma$ is called
the covariance matrix of the density operator $\widehat{\varrho}$, and it is
well-known \cite{Narcow1,Birkbis} that the positive semi-definiteness of
$\widehat{\varrho}$ implies the condition%
\begin{equation}
\Sigma+\frac{i}{2}J\geq0. \label{quantum}%
\end{equation}
This property, which is related to the so-called KLM conditions
\cite{LouMiracle1,LouMiracle2}, has the following converse: let $\Sigma$ be a
real symmetric $2n\times2n$ matrix satisfying (\ref{quantum}). Then, for every
$\overline{z}\in\mathbb{R}^{2n}$, the Gaussian probability distribution
\begin{equation}
\rho(z)=\frac{1}{(2\pi)^{n}\sqrt{\det\Sigma}}e^{-\frac{1}{2}\Sigma
^{-1}(z-\overline{z})^{2}} \label{Gauss}%
\end{equation}
is the Wigner distribution of a density operator $\widehat{\varrho}$ (and
$\widehat{\varrho}$ has trivially the Feichtinger property for every $s\geq
0$). Writing the covariance matrix in block form%
\begin{equation}
\Sigma=%
\begin{pmatrix}
\Sigma_{XX} & \Sigma_{XP}\\
\Sigma_{PX} & \Sigma_{PP}%
\end{pmatrix}
\text{ \ },\text{ \ }\Sigma_{PX}=\Sigma_{XP}^{T} \label{defcovma}%
\end{equation}
with $\Sigma_{XX}=(\sigma_{x_{j}x_{k}})_{1\leq j,k\leq n}$, $\Sigma
_{PP}=(\sigma_{p_{j}p_{k}})_{1\leq j,k\leq n}$, and $\Sigma_{XP}%
=(\sigma_{x_{j}p_{k}})_{1\leq j,k\leq n}$ it follows from Sylvester's
criterion for the leading principal minors of a positive matrix that the
condition (\ref{quantum}) is equivalent to the Robertson--Schr\"{o}dinger
(RS)\ inequalities
\begin{equation}
\sigma_{x_{j}x_{j}}\sigma_{p_{j}p_{j}}\geq\sigma_{x_{j}p_{j}}^{2}+\tfrac{1}%
{4}\text{ \ },\text{ \ }1\leq j\leq n, \label{RS}%
\end{equation}
for position and momentum measurements (\cite{go09,Birkbis}.

As discussed in previous section, this formulation of the uncertainty
principle is heavily dependent on the way uncertainties are measured (in this
case the variances and covariances of certain stochastic variables). We are
going to show that polar duality allows to formulate a more general principle.

\subsection{Polar duality and a conjecture}

Suppose that a large number of position and momentum measurements are made on
some physical system, identified with a point in $T^{\ast}\mathbb{R}%
^{n}=\mathbb{R}_{x}^{n}\times\mathbb{R}_{p}^{n}\equiv\mathbb{R}^{2n}$. These
measurements yield \textquotedblleft clouds\textquotedblright\ of points in
$\mathbb{R}_{x}^{n}$ and $\mathbb{R}_{p}^{n}$, respectively. Using standard
methods from statistics and optimization theory \cite{ALG2,ALG1} we can model
these subsets by \textquotedblleft best fit\textquotedblright\ ellipsoids
$\Omega_{X}\subset\mathbb{R}_{x}^{n}$ and $\Omega_{X}\subset\mathbb{R}_{p}%
^{n}$ (which we assume, for simplicity, both centered at the origin). For
instance, a standard and robust statistical method is to use Van Aelst and
Rousseuw's \cite{MVE} Minimum Volume Ellipsoid method (MVE) ,of which we have
given a short description in \cite{go09}; also see our paper \cite{stat} where
the MVE method is related to the notion of symplectic capacity. Now, according
to the principle of quantum indeterminacy \cite{neu}, it does not make sense
to attribute a precise value to the position or the momentum, so the sets of
measurements, ad hence the ellipsoids $\Omega_{X}$ and $\Omega_{P}$, cannot be
simultaneously arbitrarily small. This leads us to make the following physical
conjecture, which could be (in principle) justified experimentally:

\begin{conjecture}
The ellipsoids $\Omega_{X}\subset\mathbb{R}_{x}^{n}$ and $\Omega_{P}%
\subset\mathbb{R}_{p}^{n}$ form a dual pair: $\Omega_{X}^{o}\subset\Omega_{P}$.
\end{conjecture}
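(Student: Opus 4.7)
The plan is to reduce the conjecture to the machinery of Section \ref{sec2} by identifying the marginal measurement ellipsoids with the orthogonal projections of a single phase-space covariance ellipsoid on which the quantum positivity condition (\ref{quantum}) enforces $c_{\min}^{\mathrm{lin}}\geq \pi$, and then invoking Corollary \ref{Lemmaconvex}.

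First I would lift the two separate statistical fits to a joint fit in phase space: treating the position and momentum measurements jointly as a cloud in $\mathbb{R}^{2n}$, fit a centered ellipsoid $\Omega=\{z:\tfrac{1}{2}\Sigma^{-1}z\cdot z\leq 1\}$ where $\Sigma$ is the $2n\times 2n$ covariance matrix written in the block form (\ref{defcovma}). Using Lemma \ref{LemmaProj} together with the classical Schur complement identity $(M/M_{PP})^{-1}=(M^{-1})_{XX}$ applied to $M=\tfrac{1}{2}\Sigma^{-1}$, one gets $\Pi_X\Omega=\{x:\tfrac{1}{2}\Sigma_{XX}^{-1}x\cdot x\leq 1\}$ and symmetrically for $\Pi_P\Omega$; these coincide with the natural one-standard-deviation ellipsoids of the marginal statistics, and are the objects to be identified with $\Omega_X$ and $\Omega_P$.

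Second, I would invoke the quantum positivity condition (\ref{quantum}), $\Sigma+\tfrac{i}{2}J\geq 0$, which by Williamson's theorem is equivalent to saying that the smallest symplectic eigenvalue $\lambda_{\min}^{\omega}(\Sigma)$ is at least $1/2$. Since symplectic eigenvalues rescale by positive scalars and invert under inversion, the largest symplectic eigenvalue of $M=\tfrac{1}{2}\Sigma^{-1}$ equals $1/(2\lambda_{\min}^{\omega}(\Sigma))$, so formula (\ref{capellipsoid}) gives $c(\Omega)=2\pi\lambda_{\min}^{\omega}(\Sigma)\geq \pi$; since all symplectic capacities coincide on ellipsoids, also $c_{\min}^{\mathrm{lin}}(\Omega)\geq \pi$. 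Corollary \ref{Lemmaconvex} applied with $(\ell,\ell^{\prime})=(\ell_X,\ell_P)$ then delivers $(\Pi_X\Omega)^{o}\subset \Pi_P\Omega$, which is exactly $\Omega_X^{o}\subset \Omega_P$.

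The main obstacle is interpretative rather than computational. The conjecture is a \emph{physical} statement about measurement data, and its rigorous content depends on accepting that the statistical fit reproduces the projections of a covariance ellipsoid to which the positivity condition (\ref{quantum}) applies. For Gaussian or near-Gaussian states this identification is exact and the argument above is an actual proof; but for strongly non-Gaussian states the marginal MVE-type estimators are not generally the projections of a joint MVE, and precisely the Hilgevoord tail phenomenon recalled in Section \ref{secup} shows that variance-based fits can badly underestimate the true spread. A fully rigorous theorem therefore requires either restricting to the Gaussian regime, or replacing the measurement ellipsoids by a canonical phase-space object derived from the Wigner distribution of the underlying state, at which point the conjecture is subsumed by Corollary \ref{Lemmaconvex}.
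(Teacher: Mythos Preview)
The statement you are attempting to prove is explicitly labeled a \emph{Conjecture} in the paper, and the paper offers no proof of it. Immediately after stating it the author writes that he will ``make the above conjecture plausible by showing that it implies (but is more general than) the RS inequalities (\ref{RS})''. The paper's argument thus runs in the \emph{opposite} direction from yours: it takes the dual-pair relation $\Omega_X^{o}\subset\Omega_P$ as a physical postulate about experimentally obtained measurement clouds and deduces the Robertson--Schr\"odinger inequalities from it, rather than deriving the dual-pair relation from the quantum positivity condition.

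Your chain of implications is mathematically sound under the extra hypothesis you introduce, namely that $\Omega_X$ and $\Omega_P$ arise as the projections $\Pi_X\Omega$, $\Pi_P\Omega$ of a single ellipsoid $\Omega=\{z:\tfrac{1}{2}\Sigma^{-1}z\cdot z\leq1\}$ with $\Sigma$ the covariance matrix of a genuine density operator. Under that hypothesis the implication (\ref{quantum}) $\Rightarrow c(\Omega)\geq\pi$ $\Rightarrow$ Corollary~\ref{Lemmaconvex} $\Rightarrow\Omega_X^{o}\subset\Omega_P$ is precisely the content of the unnumbered Lemma preceding Proposition~\ref{PropUP} together with Theorem~\ref{Thm1}; you are recombining ingredients the paper already provides. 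But the Conjecture, as stated, concerns ellipsoids produced by MVE-type fitting of \emph{separate} position and momentum clouds, with no assumption that they are marginals of a joint quantum covariance matrix. That gap---which you yourself flag in your final paragraph---is not a technicality to be patched: it is the entire empirical content of the conjecture, and is exactly why the author presents it as something to be ``(in principle) justified experimentally'' rather than as a theorem. So there is nothing to compare your proof against; the correct conclusion is simply that the statement is not proved in the paper and your argument, while internally correct, proves a different (and already known) statement.
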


We are going to make the above conjecture plausible by showing that it implies
(but is more general then) the RS inequalities (\ref{RS}). We begin by
defining two real symmetric $n\times n$ ellipsoids by identifying the
ellipsoids $\Omega_{X}$ and $\Omega_{P}$ with the sets
\begin{align*}
\Omega_{X}  &  =\{x\in\mathbb{R}_{x}^{n}:\tfrac{1}{2}\Sigma_{XX}^{-1}x\cdot
x\leq1\}\\
\Omega_{X}  &  =\{p\in\mathbb{R}_{p}^{n}:\tfrac{1}{2}\Sigma_{PP}^{-1}p\cdot
p\leq1\}.
\end{align*}
Let us assume that $(\Omega_{X},\Omega_{P})$ is an exact polar dual pair, that
is $\Omega_{X}^{o}=\Omega_{P}$. In view of the duality property
(\ref{dualellh}) for ellipsoids we must then have $\Sigma_{XX}\Sigma
_{PP}=\frac{1}{4}I_{n\times n}$ (which reduces to the equality $\sigma
_{xx}\sigma_{pp}=\frac{1}{4}$ for $n=1$). It follows that we have%
\[
\frac{1}{2}\Sigma^{-1}=%
\begin{pmatrix}
\frac{1}{2}\Sigma_{XX}^{-1} & 0\\
0 & \frac{1}{2}\Sigma_{PP}^{-1}%
\end{pmatrix}
\in\operatorname*{Sp}(n)
\]
hence $\Sigma$ automatically satisfies the condition (\ref{quantum}) so the RS
inequalities (\ref{RS}) are satisfied, in fact we have $\sigma_{x_{j}x_{j}%
}\sigma_{p_{j}p_{j}}\geq\tfrac{1}{4}$ for $1\leq j\leq n$. Notice that the
Gaussian distribution (\ref{Gauss}) is here the Wigner transform of the
function%
\begin{equation}
\psi(x)=\left(  \tfrac{1}{2\pi}\right)  ^{n/4}(\det\Sigma_{XX})^{-1/4}%
e^{-\tfrac{1}{4}\Sigma_{XX}^{-1}x\cdot x}. \label{psix}%
\end{equation}

This discussion generalizes to arbitrary dual polar pairs; the key to this
generalization is the following result:

\begin{lemma}
Let $\Sigma>0$. The condition $\Sigma+\frac{1}{2}iJ\geq0$ is equivalent to the
existence of $S\in\operatorname*{Sp}(n)$ such that $S(B^{2n}(1))\subset\Omega$
(that is to the condition $c(\Omega)\geq\pi$) where $\Omega$ is the ellipsoid
\[
\Omega=\{z\in\mathbb{R}^{2n}:\tfrac{1}{2}\Sigma^{-1}z\cdot z\leq1\}.
\]

\end{lemma}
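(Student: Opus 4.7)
The cleanest approach is to put $\Sigma$ in Williamson normal form and evaluate both conditions there. By Williamson's theorem there exist $S\in\operatorname*{Sp}(n)$ and $\lambda_1,\dots,\lambda_n>0$ (the symplectic eigenvalues of $\Sigma$) such that $\Sigma=S^{T}DS$ with $D=\operatorname{diag}(\Lambda,\Lambda)$ and $\Lambda=\operatorname{diag}(\lambda_1,\dots,\lambda_n)$. I will show that each of the two conditions in the statement is equivalent to the single inequality $\min_j\lambda_j\geq\tfrac{1}{2}$, after which the parenthetical reformulation in terms of $c(\Omega)$ is read off from (\ref{capellipsoid}).

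For the spectral side, the symplectic identity $S^{T}JS=J$ yields
\[
\Sigma+\tfrac{i}{2}J=S^{T}\bigl(D+\tfrac{i}{2}J\bigr)S,
\]
so, $S$ being real and invertible, $\Sigma+\tfrac{i}{2}J\geq 0$ iff $D+\tfrac{i}{2}J\geq 0$. Writing $D+\tfrac{i}{2}J$ in $2\times 2$ block form with $n\times n$ blocks and taking the Schur complement of the $(1,1)$-block reduces positivity to $\Lambda-\tfrac{1}{4}\Lambda^{-1}\geq 0$, equivalently $\lambda_j\geq\tfrac{1}{2}$ for every $j$.

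For the geometric side, the same diagonalization gives $\Omega=S^{T}(E)$ with $E=\{u:\tfrac{1}{2}D^{-1}u\cdot u\leq 1\}$. Since $S^{T}\in\operatorname*{Sp}(n)$, the existence of some $S'\in\operatorname*{Sp}(n)$ with $S'(B^{2n}(1))\subset\Omega$ is equivalent to that of some $S''\in\operatorname*{Sp}(n)$ with $S''(B^{2n}(1))\subset E$. A direct computation shows
\[
\tfrac{1}{2}D^{-1}u\cdot u=\sum_{j=1}^{n}(x_j^{2}+p_j^{2})/(2\lambda_j),
\]
so the identity map already embeds $B^{2n}(1)$ into $E$ when $\min_j\lambda_j\geq\tfrac{1}{2}$. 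Conversely, by formula (\ref{capellipsoid}) the symplectic capacity of $E$ equals $\pi/\lambda_{\max}^{\omega}(\tfrac{1}{2}D^{-1})=2\pi\min_j\lambda_j$; if this is strictly less than $\pi$ then by monotonicity and symplectic invariance of capacities no such $S''$ can exist. This simultaneously produces the parenthetical identity $c(\Omega)=c(E)=2\pi\min_j\lambda_j$, making $c(\Omega)\geq\pi$ match the other two conditions.

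The only substantive input is Williamson's diagonalization; the rest is routine. The mild point of care is conventional: $\operatorname*{Sp}(n)$ is stable under transposition, which is what licenses the switch from $S$ to $S^{T}$ in the geometric step, and one must keep track of which side $S$ multiplies on so that $SJS^{T}=J$ is used correctly in both computations.
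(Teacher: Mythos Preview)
Your argument is correct. The paper does not actually supply a proof of this lemma; it simply refers the reader to \cite{go09,Birkbis}. Your route via Williamson diagonalization---reducing both conditions to the single inequality $\min_{j}\lambda_{j}\geq\tfrac{1}{2}$ on the symplectic spectrum of $\Sigma$---is precisely the standard argument found in those references, and every step (the congruence $\Sigma+\tfrac{i}{2}J=S^{T}(D+\tfrac{i}{2}J)S$ via $S^{T}JS=J$, the Schur-complement reduction, the capacity computation through (\ref{capellipsoid}), and the closure of $\operatorname*{Sp}(n)$ under transposition) is handled correctly.
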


\begin{proof}
See \cite{go09,Birkbis} for detailed proofs.
\end{proof}

As we have discussed in \cite{go09} the condition $c(\Omega)\geq\pi$ thus
implies the RS inequalities (\ref{RS}), and can therefore be viewed as a
symplectically invariant expression of the uncertainty principle of quantum mechanics.

\begin{proposition}
\label{PropUP}Let $(\ell,\ell^{\prime})$ be a pair of transverse Lagrangian
planes and $\Omega_{X}\subset\ell$, $\Omega_{P}\subset\ell^{\prime}$ two
centered ellipsoids. (i) If $(\Omega_{X},\Omega_{P})$ is a Lagrangian dual
pair, there exists ellipsoids $\Omega$ such that $\Pi_{\ell}\Omega=\Omega_{X}$
and $\Pi_{\ell^{\prime}}\Omega=\Omega_{P}$ and $\Omega$ has symplectic
capacity $c(\Omega)\geq\pi$; (ii) Equivalently, the matrix $\Sigma$ defined
by
\[
\Omega=\{z\in\mathbb{R}^{2n}:\tfrac{1}{2}\Sigma^{-1}z\cdot z\leq1\}
\]
is the covariance matrix of a density operator with Gaussian Wigner
distribution%
\begin{equation}
\rho(z)=\frac{1}{(2\pi)^{n}\sqrt{\det\Sigma}}e^{-\frac{1}{2}\Sigma
^{-1}(z-\overline{z})^{2}}. \label{Gaussbis}%
\end{equation}
(iii) When $(\Omega_{X},\Omega_{P})$ is an exact pair, i.e. when $\Omega
_{X}^{o}=\Omega_{P}$, then $\rho(z)=W\psi(z)$ where $\psi$ is a Gaussian
\begin{equation}
\psi(x)=\left(  \tfrac{1}{\pi\hbar}\right)  ^{n/4}(\det A)^{1/4}e^{-\tfrac
{1}{2\hbar}(A+iB)x^{2}} \label{fay}%
\end{equation}
where $A$ and $B$ are symmetric real $n\times n$ matrices and $A>0$.
\end{proposition}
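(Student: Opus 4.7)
My plan is to reduce everything to coordinate Lagrangian planes via the symplectic covariance of Lagrangian polar duality (Proposition \ref{propsycogeom}): pick $S'\in\operatorname*{Sp}(n)$ with $(\ell,\ell')=S'(\ell_{X},\ell_{P})$, and let $X=(S')^{-1}(\Omega_{X})\subset\ell_{X}$, $P=(S')^{-1}(\Omega_{P})\subset\ell_{P}$. The Lagrangian dual pair hypothesis then becomes the ordinary polar dual pair condition $X^{o}\subset P$. Because symplectomorphisms preserve $c_{\min}^{\mathrm{lin}}$ and conjugate the oblique projections $\Pi_{\ell_{X}},\Pi_{\ell_{P}}$ to $\Pi_{\ell},\Pi_{\ell'}$, it suffices to find a centered ellipsoid $\Omega_{0}\subset\mathbb{R}^{2n}$ with $\Pi_{X}\Omega_{0}=X$, $\Pi_{P}\Omega_{0}=P$, and $c(\Omega_{0})\geq\pi$; then $\Omega=S'(\Omega_{0})$ discharges the general claim.

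For part (i) I would parametrize $X=\{x:Ax\cdot x\leq 1\}$ and $P=\{p:Bp\cdot p\leq 1\}$ with $A,B>0$, so that Proposition \ref{propell} recasts $X^{o}\subset P$ as $AB\leq I_{n\times n}$. My candidate is the block-diagonal ellipsoid $\Omega_{0}=\{(x,p):Ax\cdot x+Bp\cdot p\leq 1\}$ defined by $M=\operatorname*{diag}(A,B)$: its Schur complements are simply $M/M_{PP}=A$ and $M/M_{XX}=B$, so Lemma \ref{LemmaProj} hands over the projections for free. For the capacity, formula (\ref{capellipsoid}) reduces the task to controlling the symplectic eigenvalues of $M$, which are the square roots of the eigenvalues of $A^{1/2}BA^{1/2}$; the condition $AB\leq I_{n\times n}$ forces these into $(0,1]$, so $\lambda_{\max}^{\omega}\leq 1$ and $c(\Omega_{0})\geq\pi$, completing part (i) after pushing forward by $S'$.

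Part (ii) is then a direct restatement: with $M=\tfrac{1}{2}\Sigma^{-1}$, the lemma immediately preceding the proposition equates $c(\Omega)\geq\pi$ with the quantum positivity $\Sigma+\tfrac{i}{2}J\geq 0$, which by the converse discussion following (\ref{quantum}) is exactly what makes $\Sigma$ the covariance matrix of a density operator with Gaussian Wigner function (\ref{Gaussbis}). For the exact case (iii), $P=X^{o}$ forces $B=A^{-1}$ in the reduced picture, so $M=\operatorname*{diag}(A,A^{-1})\in\operatorname*{Sp}(n)$ and $\Omega_{0}=M^{-1/2}(B^{2n}(1))$ is a symplectic ball --- necessarily the unique one furnished by Theorem \ref{Thm2}. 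Since $c(\Omega)=\pi$, the density operator is a minimum-uncertainty Gaussian and hence a pure state; applying the metaplectic operator $\widehat{S}'$ to the standard ground state produces a wavefunction of the form (\ref{fay}), and the remaining bookkeeping --- matching the real symmetric matrices $A$ and $B$ in the exponent against the blocks of the symplectic matrix defining $\Omega$ and verifying via (\ref{Wig}) that its Wigner transform reproduces (\ref{Gaussbis}) --- is a standard if slightly tedious Gaussian-integral calculation, which I expect to be the only spot where real care is needed.
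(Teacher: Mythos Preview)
Your proposal is correct and in fact considerably more complete than the paper's own proof. The paper's argument is a single sentence: it only treats the exact case $\Omega_{P}=\Omega_{X}^{o}$, invoking Theorem~\ref{Thm2} to produce the unique symplectic ball $\Omega=S(B^{2n}(1))$ (the John ellipsoid of $X_{\ell}\times X_{\ell'}^{o}$) whose projections are $\Omega_{X}$ and $\Omega_{P}$, and stops there --- parts (i) and (ii) for a non-exact dual pair are not addressed at all.

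Your route for (i) is genuinely different and fills this gap: rather than appealing to Theorem~\ref{Thm2}, you write down the block-diagonal ellipsoid $M=\operatorname*{diag}(A,B)$ directly, read off its projections from Lemma~\ref{LemmaProj}, and bound its symplectic capacity via the eigenvalues of $A^{1/2}BA^{1/2}$ and formula~(\ref{capellipsoid}). This is elementary, explicit, and handles the general dual-pair condition $A\leq B^{-1}$ cleanly; the paper's approach, relying on Theorem~\ref{Thm2}, only yields an $\Omega$ with the correct projections when $B=A^{-1}$. For part (iii) your argument and the paper's coincide (both reduce to Theorem~\ref{Thm2}), and your treatment of (ii) via the lemma preceding the proposition is exactly what the paper would need but does not spell out.
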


\begin{proof}
Suppose first that $\Omega_{X}=\Omega_{X}^{o}$. In view of Theorem
\ref{Thm2}(i) there exists a unique symplectic ball $\Omega=S(B^{2n}(1))$
($S\in\operatorname*{Sp}(n)$) having projections $\Pi_{\ell}\Omega=X_{\ell}$
and $,\Pi_{\ell^{\prime}}\Omega=X_{\ell^{\prime}}^{o}$ and this ellipsoid is
the John ellipsoid of $X_{\ell}\times X_{\ell^{\prime}}^{o}$.
\end{proof}

\begin{remark}
The result above is closely related to \textquotedblleft Pauli's
reconstruction problem\textquotedblright\ \cite{Pauli,espo} where one wants to
recover a function $\psi$ knowing its modulus and the modulus of its Fourier
transform; see our discussion of Pauli's problem in \cite{gopolar}.
\end{remark}

\section{Hardy's Uncertainty Principle\label{sechardy}}

We briefly discussed Hardy's uncertainty principle in the beginning of Section
\ref{secup}. Let us study the multi-dimensional variant of it from the point
of view of polar duality; we will extend previous results in \cite{ACHA}. The
Fourier transform of $\psi\in L^{1}(\mathbb{R}^{n})\cap L^{2}(\mathbb{R}^{n})$
is defined by%
\begin{equation}
F\psi(p)=\left(  \tfrac{1}{2\pi}\right)  ^{n/2}\int_{\mathbb{R}^{n}}%
e^{-ipx}\psi(x)dx \label{fh}%
\end{equation}
and extends extends into a unitary automorphism $L^{2}(\mathbb{R}%
^{n})\longrightarrow L^{2}(\mathbb{R}^{n})$ also denoted by $F$. In
\cite{golu07} we have proven the following generalization Hardy's result:
assume that $\psi\in L^{2}(\mathbb{R}^{n})$, $||\psi||\neq0$, satisfies the
sub-Gaussian estimates%
\begin{equation}
|\psi(x)|\leq Ce^{-\tfrac{1}{2}Ax\cdot x}\text{ \ and \ }|F\psi(p)|\leq
Ce^{-\tfrac{1}{2}Bp\cdot p} \label{Hardy2}%
\end{equation}
where $A$ and $B$ are two real positive definite symmetric matrices and $C>0$. Then:

\begin{proposition}
\label{propAB}(i) The eigenvalues $\lambda_{j}$, $j=1,...,n$, of $AB$ are
$\leq1$; (ii) If $\lambda_{j}=1$ for all $j$, then $\psi(x)=Ce^{-\frac{1}%
{2}Ax^{2}}$ for some complex constant $C$.
\end{proposition}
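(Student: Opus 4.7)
The plan is to reduce the multidimensional Hardy inequality to the classical one-dimensional Hardy uncertainty principle by a simultaneous congruence diagonalization of $A$ and $B$. Pick $Q\in O(n,\mathbb{R})$ with $Q^{T}A^{1/2}BA^{1/2}Q=D$ diagonal, set $L:=A^{-1/2}Q$, and consider the substitution $x=Ly$ together with the renormalized function $g(y):=|\det L|^{1/2}\psi(Ly)$. Tracking the Fourier transform through this linear change of variable gives
\[
|g(y)|\leq C\,e^{-\tfrac{1}{2}(L^{T}AL)y\cdot y}=C\,e^{-\tfrac{1}{2}|y|^{2}},\qquad |Fg(q)|\leq C\,e^{-\tfrac{1}{2}(L^{-1}BL^{-T})q\cdot q}=C\,e^{-\tfrac{1}{2}\sum_{j}\lambda_{j}q_{j}^{2}},
\]
and the product $(L^{T}AL)(L^{-1}BL^{-T})=L^{T}ABL^{-T}$ is similar to $AB$, so that the diagonal entries $\lambda_{1},\dots,\lambda_{n}$ of $D$ are precisely the eigenvalues of $AB$. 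This reduces the statement to the case $A=I$, $B=D$.

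For each index $j$ and each fixed $y'=(y_{k})_{k\neq j}$, consider the one-variable slice $f_{y'}(y_{j}):=g(y)$. The bound on $g$ immediately gives $|f_{y'}(y_{j})|\leq C_{1}(y')e^{-y_{j}^{2}/2}$. The one-dimensional Fourier transform $\widehat{f_{y'}}(p_{j})$ coincides with the partial Fourier transform of $g$ in the $j$-th variable, which in turn can be written as the inverse $(n-1)$-dimensional Fourier transform of $Fg$ in the variables $q'=(q_{k})_{k\neq j}$. Applying the Gaussian bound on $Fg$ and integrating out these $q_{k}$ yields $|\widehat{f_{y'}}(p_{j})|\leq C_{2}\,e^{-\lambda_{j}p_{j}^{2}/2}$ with $C_{2}$ independent of $y'$. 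The classical one-dimensional Hardy principle now forces $\lambda_{j}\leq 1$: for $\lambda_{j}>1$ every slice would have to vanish, contradicting $g\not\equiv 0$. This proves (i).

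For (ii), suppose $\lambda_{j}=1$ for every $j$. The equality case of the one-dimensional Hardy theorem, applied slicewise in the first variable, gives $g(y)=h(y_{2},\dots,y_{n})\,e^{-y_{1}^{2}/2}$. Substituting this into the bounds on $g$ and $Fg$, and using that the Fourier transform of $e^{-y_{1}^{2}/2}$ is $e^{-q_{1}^{2}/2}$, one checks that the residual factor $h$ satisfies the hypotheses of the proposition in dimension $n-1$, with all eigenvalues of the associated matrix product still equal to one. An induction on $n$, whose base case $n=1$ is the classical Hardy theorem, yields $g(y)=C_{0}\,e^{-|y|^{2}/2}$, and inverting the substitution $x=Ly$ gives $\psi(x)=C\,e^{-\tfrac{1}{2}Ax\cdot x}$.

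The main technical subtlety is the derivation of the one-dimensional Fourier estimate on the slice $f_{y'}$: since $Fg$ depends on all $n$ momentum variables, a one-variable estimate cannot be read off directly, and the detour through the partial Fourier transform combined with the separable Gaussian structure of the bound on $Fg$ is essential. Once this estimate is in place, both assertions reduce transparently to the classical one-dimensional Hardy principle and to bookkeeping on the simultaneous congruence diagonalization of $A$ and $B$.
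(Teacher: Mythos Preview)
Your argument is correct and follows essentially the same route the paper indicates: a simultaneous congruence diagonalization of $A$ and $B$ (the paper's Lemma~\ref{lem1}) followed by a reduction to the one-dimensional Hardy theorem via slicing and partial Fourier transforms. The only cosmetic difference is that you normalize to the pair $(I,D)$ with $D=\operatorname{diag}(\lambda_j)$, whereas Lemma~\ref{lem1} normalizes to $(\Lambda,\Lambda)$ with $\Lambda=\operatorname{diag}(\sqrt{\lambda_j})$; the two forms differ by a further diagonal rescaling and lead to the same one-variable product $\lambda_j\leq 1$.
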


The proof of this result is based on the following diagonalization result
which is a refinement of Williamson's symplectic diagonalization theorem in
the block-diagonal case:

\begin{lemma}
\label{lem1}Let $A$ and $B$ be two real positive-definite symmetric $n\times
n$ matrices. There exists $L\in GL(n,\mathbb{R})$ such that%
\begin{equation}%
\begin{pmatrix}
L^{T} & 0\\
0 & L^{-1}%
\end{pmatrix}%
\begin{pmatrix}
A & 0\\
0 & B
\end{pmatrix}%
\begin{pmatrix}
L & 0\\
0 & (L^{T})^{-1}%
\end{pmatrix}
=%
\begin{pmatrix}
\Lambda & 0\\
0 & \Lambda
\end{pmatrix}
\label{lalb}%
\end{equation}
\ where $\Lambda=\operatorname*{diag}(\sqrt{\lambda_{1}},...,\sqrt{\lambda
_{n}})$ is the diagonal matrix whose eigenvalues are the square roots of the
eigenvalues $\lambda_{1},...,\lambda_{n}$ of $AB$.
\end{lemma}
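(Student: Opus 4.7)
The plan is to reduce the block-diagonal identity (\ref{lalb}) to the pair of simultaneous congruence identities $L^{T}AL = \Lambda$ and $L^{-1}B(L^{T})^{-1} = \Lambda$, and then construct $L$ explicitly from a spectral decomposition that congruence-diagonalizes $A$ and $B$ to the \emph{same} diagonal matrix $\Lambda$. Conceptually this is the statement that a positive-definite symmetric matrix of block-diagonal form $\operatorname{diag}(A,B)$ can be Williamson-diagonalized by a \emph{free} symplectic matrix of the form $\operatorname{diag}(L,(L^{T})^{-1})$, without needing any off-diagonal mixing block.

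First I would check that the spectrum mentioned in the statement is well defined: although $AB$ is not symmetric, the identity $AB = A^{1/2}(A^{1/2}BA^{1/2})A^{-1/2}$ shows it is similar to the symmetric positive-definite matrix $A^{1/2}BA^{1/2}$, so its eigenvalues $\lambda_{1},\dots,\lambda_{n}$ are real and strictly positive, and $\Lambda=\operatorname{diag}(\sqrt{\lambda_{1}},\dots,\sqrt{\lambda_{n}})$ is a genuine positive diagonal matrix.

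Next, apply the spectral theorem to $A^{1/2}BA^{1/2}$: choose an orthogonal $O$ with $O^{T}(A^{1/2}BA^{1/2})O=\Lambda^{2}$. Then define
\[
L = A^{-1/2}\,O\,\Lambda^{1/2} \in GL(n,\mathbb{R}),
\]
which is invertible as a product of invertible factors. A direct calculation gives
\[
L^{T}AL = \Lambda^{1/2}O^{T}A^{-1/2}\cdot A\cdot A^{-1/2}O\Lambda^{1/2} = \Lambda^{1/2}(O^{T}O)\Lambda^{1/2} = \Lambda,
\]
and, using $L^{-1}=\Lambda^{-1/2}O^{T}A^{1/2}$,
\[
L^{-1}B(L^{T})^{-1} = \Lambda^{-1/2}O^{T}(A^{1/2}BA^{1/2})O\Lambda^{-1/2} = \Lambda^{-1/2}\Lambda^{2}\Lambda^{-1/2} = \Lambda.
\]
These are exactly the two diagonal blocks of the product on the left-hand side of (\ref{lalb}); the off-diagonal blocks vanish trivially because all outer matrices are block-diagonal.

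There is no serious obstacle here: the lemma is really just the statement that two symmetric positive-definite matrices are simultaneously congruence-diagonalizable to a common diagonal matrix, with the common diagonal being determined by the spectrum of $AB$. The only point requiring a little care is picking the right object to spectrally decompose, namely $A^{1/2}BA^{1/2}$ rather than $AB$ itself, so that a single orthogonal $O$ together with the square-root factor $\Lambda^{1/2}$ suffices to handle both congruence relations at once.
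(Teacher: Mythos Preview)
Your proof is correct. The reduction to the two simultaneous congruence identities $L^{T}AL=\Lambda$ and $L^{-1}B(L^{T})^{-1}=\Lambda$ is exactly what the block-diagonal identity (\ref{lalb}) amounts to, and your explicit choice $L=A^{-1/2}O\Lambda^{1/2}$ with $O$ orthogonally diagonalizing $A^{1/2}BA^{1/2}$ verifies both at once. The preliminary remark that $AB$ is similar to the positive-definite symmetric matrix $A^{1/2}BA^{1/2}$, so that the $\lambda_j$ are positive and $\Lambda$ is well defined, is the right thing to check first.

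As for comparison with the paper: the paper does not actually supply a proof of this lemma. It is stated as a known refinement of Williamson's diagonalization theorem in the block-diagonal case and used as input to Proposition~\ref{propAB}; the argument is deferred to the reference \cite{golu07}. Your construction is the standard direct one and matches in spirit what the description ``refinement of Williamson's theorem'' suggests: the block-diagonal symplectic matrix $\operatorname{diag}(L,(L^{T})^{-1})$ plays the role of the diagonalizing symplectomorphism, and the symplectic eigenvalues of $\operatorname{diag}(A,B)$ are precisely the $\sqrt{\lambda_j}$. So there is nothing to contrast methodologically; you have simply filled in a proof the paper omits.
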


Proposition \ref{propAB} implies the following geometric version of Hardy's
uncertainty principle:

\begin{corollary}
\label{cor1}Let $X_{A}=\{x\in\mathbb{R}^{n}:Ax\cdot x\leq1\}$ and
$P_{B}=\{p\in\mathbb{R}^{n}:Bp\cdot p\leq1\}$. (i) There exists a $\psi\in
L^{2}(\mathbb{R}^{n})$ with $||\psi||\neq0$ satisfying the multi-dimensional
Hardy inequalities (\ref{Hardy2}) if and only if $(X_{A},P_{B})$ is a dual
pair, (ii) we have $X_{A}^{o}=P_{B}$ if and only if $\psi(x)=Ce^{-\frac{1}%
{2}Ax^{2}}$ for some constant $C>0$.
\end{corollary}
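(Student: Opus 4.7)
The plan is to use Proposition \ref{propell} as the dictionary between polar duality and matrix inequalities: $(X_A, P_B)$ is a dual pair if and only if $A \leq B^{-1}$, and an exact dual pair if and only if $AB = I_{n\times n}$. The content of Proposition \ref{propAB} is then to be reread through this dictionary. Specifically, the eigenvalues of $AB$ coincide with those of the symmetric matrix $A^{1/2}BA^{1/2}$, so the condition \textit{all eigenvalues of $AB$ are $\leq 1$} translates to $A^{1/2}BA^{1/2} \leq I_{n\times n}$, equivalently $B \leq A^{-1}$; and \textit{all eigenvalues equal one} translates to $AB = I_{n\times n}$.

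For (i), the forward direction then becomes immediate: the existence of a nonzero $\psi$ satisfying (\ref{Hardy2}) forces, via Proposition \ref{propAB}(i) and the translation above, $B \leq A^{-1}$, hence $(X_A, P_B)$ is a dual pair by Proposition \ref{propell}. For the converse I would exhibit the Gaussian $\psi(x) = e^{-\frac{1}{2}Ax \cdot x}$ as an explicit witness: it lies in $L^2(\mathbb{R}^n)$ with $\|\psi\| \neq 0$, satisfies $|\psi(x)| \leq e^{-\frac{1}{2}Ax \cdot x}$ trivially, and its Fourier transform is the Gaussian $F\psi(p) = (\det A)^{-1/2} e^{-\frac{1}{2}A^{-1}p \cdot p}$; the dual pair hypothesis $A^{-1} \geq B$ then yields $|F\psi(p)| \leq C e^{-\frac{1}{2}Bp \cdot p}$, establishing (\ref{Hardy2}).

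For (ii), the forward direction is a direct application of Proposition \ref{propAB}(ii): the equality $X_A^o = P_B$ is equivalent to $AB = I_{n\times n}$, so every eigenvalue of $AB$ equals one, forcing $\psi(x) = Ce^{-\frac{1}{2}Ax \cdot x}$. For the converse, the Fourier transform of such a Gaussian is $F\psi(p) \propto e^{-\frac{1}{2}A^{-1}p \cdot p}$, whose intrinsic Gaussian decay rate is $A^{-1}$; matching this against the imposed bound $|F\psi(p)| \leq C e^{-\frac{1}{2}Bp \cdot p}$ in the critical (tight) case gives $B = A^{-1}$, i.e., $AB = I_{n\times n}$, which is $X_A^o = P_B$. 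The main obstacle in this plan is nothing deep — each step reduces to a named prior result — but rather the careful handling of the spectral-to-matrix translation for the non-symmetric product $AB$ (passing through $A^{1/2}BA^{1/2}$ to keep the order relation meaningful) and the interpretation of \emph{tightness} in the converse of part (ii), where $X_A^o = P_B$ corresponds to the extremal (saturated) case of the dual pair inequality built into Hardy's estimates.
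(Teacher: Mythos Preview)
Your proposal is correct and follows essentially the same route as the paper, which simply invokes Proposition~\ref{propAB} together with Proposition~\ref{propell}; you have merely unpacked the spectral translation (via $A^{1/2}BA^{1/2}$) and the converse of (i) (via the Gaussian witness) more explicitly than the paper's one-line proof. Your caution about the ``tightness'' interpretation in the converse of (ii) is warranted, since as literally stated that direction is ambiguous---the paper's own proof in fact only argues the forward implication of (ii).
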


\begin{proof}
(i) immediately follows from Proposition \ref{propAB} using Proposition
\ref{propell}. (ii) We have $X_{A}^{o}=P_{B}$ if and only if all the
eigenvalues of $AB$ are all equal to one hence the result by Proposition
\ref{propAB}.
\end{proof}

\subsection{Sub-Gaussian estimates for the Wigner transform}

Assuming that $\psi$ and its Fourier transform $F\psi$ are in $L^{1}%
(\mathbb{R}^{n})\cap L^{2}(\mathbb{R}^{n})$ the marginal conditions
\begin{align}
\int_{\mathbb{R}^{n}}W\psi(z)dp  &  =|\psi(x)|^{2}\label{marginal1}\\
\int_{\mathbb{R}^{n}}W\psi(z)dx  &  =|F\psi(p)|^{2} \label{marginal2}%
\end{align}
hold \cite{Birk,Wigner}. Since the knowledge of $W\psi$ determines $\psi$ up
to a constant unimodular factor, Hardy's uncertainty principle can be
reformulated in terms of the Wigner transform $W\psi$.

We apply the multi-dimensional Hardy uncertainty principle to the study of
estimates of the type%
\begin{equation}
W\psi(z)\leq Ce^{-Mz\cdot z} \label{wighardy}%
\end{equation}
where $M>0$ (\textit{i.e. }$M$ is real positive-definite and symmetric), $C>0$
and $\psi\in L^{2}(\mathbb{R}^{n})$. We will in fact assume that,in addition,
$\psi$ and its Fourier transform $F\psi$ are in $L^{2}(\mathbb{R}^{n})$ so
that the marginal conditions (\ref{marginal1}) and (\ref{marginal2})hold.

We have the following result which can be viewed as an extension of Corollary
\ref{cor1}:

\begin{proposition}
Assume that there exists a non-zero $\psi\in L^{2}(\mathbb{R}^{n})$ such that
the inequality%
\[
W\psi(z)\leq Ce^{-Mz\cdot z}\text{ \ , \ }z\in\mathbb{R}^{2n}%
\]
holds for some $C>0$. Let
\[
\Omega=\{z\in\mathbb{R}^{2n}:Mz\cdot z\leq1\}.
\]
(i) For every pair $(\ell,\ell^{\prime})$ of transversal Lagrangian planes
$(\Pi_{\ell}\Omega,\Pi_{\ell^{\prime}}\Omega)$ is a dual Lagrangian dual pair.
(ii) If $(\Pi_{\ell}\Omega)_{\ell^{\prime}}^{o}=\Pi_{\ell^{\prime}}\Omega$
then $W\psi$ is a Gaussian $Ce^{-Mz\cdot z}$.
\end{proposition}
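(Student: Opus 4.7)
The plan is to reduce everything to the coordinate case $(\ell,\ell^{\prime})=(\ell_{X},\ell_{P})$ by the now-familiar metaplectic trick, then extract sub-Gaussian estimates for $|\psi|$ and $|F\psi|$ from the hypothesis via the marginal identities~(\ref{marginal1})--(\ref{marginal2}), and finally invoke Corollary~\ref{cor1}.

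First I would treat the coordinate case. Writing $M$ in block form
\[
M=\begin{pmatrix} M_{XX} & M_{XP}\\ M_{PX} & M_{PP}\end{pmatrix},
\]
I complete the square: $Mz\cdot z=(M/M_{PP})x\cdot x+M_{PP}(p+M_{PP}^{-1}M_{PX}x)\cdot(p+M_{PP}^{-1}M_{PX}x)$, and integrate $Ce^{-Mz\cdot z}$ against $dp$ to bound the first marginal. By~(\ref{marginal1}),
\[
|\psi(x)|^{2}=\int_{\mathbb{R}^{n}}W\psi(x,p)\,dp\le C_{1}\,e^{-(M/M_{PP})x\cdot x}.
\]
The analogous computation with $M/M_{XX}$ yields $|F\psi(p)|^{2}\le C_{1}e^{-(M/M_{XX})p\cdot p}$. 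Hence $\psi$ satisfies the multi-dimensional Hardy estimates~(\ref{Hardy2}) with $A=M/M_{PP}$ and $B=M/M_{XX}$. By Lemma~\ref{LemmaProj}, $X_{A}=\Pi_{X}\Omega$ and $P_{B}=\Pi_{P}\Omega$, so Corollary~\ref{cor1}(i) gives that $(\Pi_{X}\Omega,\Pi_{P}\Omega)$ is a polar dual pair, which is exactly~(i) for the coordinate Lagrangian planes.

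For the general case in~(i), choose $S\in\operatorname*{Sp}(n)$ with $(\ell,\ell^{\prime})=S(\ell_{X},\ell_{P})$ and pick a metaplectic lift $\widehat{S}\in\operatorname*{Mp}(n)$. Setting $\psi'=\widehat{S}^{-1}\psi$, symplectic covariance of the Wigner transform gives $W\psi'(z)=W\psi(Sz)\le Ce^{-(S^{T}MS)z\cdot z}$, so the coordinate case applied to $\psi'$ shows that $(\Pi_{X}\Omega',\Pi_{P}\Omega')$ is a polar dual pair, with $\Omega'=S^{-1}\Omega=\{S^{T}MSz\cdot z\le 1\}$. Since $\Pi_{\ell}=S\Pi_{X}S^{-1}$ and $\Pi_{\ell^{\prime}}=S\Pi_{P}S^{-1}$, Proposition~\ref{propsycogeom} translates this into the Lagrangian polar dual relation $(\Pi_{\ell}\Omega)_{\ell^{\prime}}^{o}\subset\Pi_{\ell^{\prime}}\Omega$.

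For~(ii), I reduce again to the coordinate case. Equality $(\Pi_{X}\Omega)^{o}=\Pi_{P}\Omega$ is equivalent, via Proposition~\ref{propell}, to $(M/M_{PP})(M/M_{XX})=I_{n\times n}$, i.e., all eigenvalues of $AB$ equal $1$. Corollary~\ref{cor1}(ii) then forces $\psi(x)=C'e^{-\frac{1}{2}Ax\cdot x}$ with $A=M/M_{PP}$. A direct computation of the Wigner transform of this real Gaussian gives $W\psi(z)=C''e^{-M_{0}z\cdot z}$ with $M_{0}=\operatorname*{diag}(A,A^{-1})$. The main obstacle, which I anticipate to be the subtle point, is to show that $M_{0}=M$: combining the pointwise bound $W\psi(z)\le Ce^{-Mz\cdot z}$ with the identity $W\psi=C''e^{-M_{0}z\cdot z}$ gives $(M_{0}-M)z\cdot z\ge$ constant for all $z$, so $M_{0}\ge M$. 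Computing
\[
M-M_{0}=\begin{pmatrix} M_{XP}M_{PP}^{-1}M_{PX} & M_{XP}\\ M_{PX} & M_{PX}M_{XX}^{-1}M_{XP}\end{pmatrix}\le 0,
\]
and observing that the diagonal blocks are positive semi-definite forces $M_{XP}=0$; together with $AB=I$ this yields $M=\operatorname*{diag}(A,A^{-1})=M_{0}$, so $W\psi(z)=C''e^{-Mz\cdot z}$ as claimed. For a general pair $(\ell,\ell^{\prime})$ the metaplectic conjugation above applies verbatim: if $\psi'=\widehat{S}^{-1}\psi$ satisfies the coordinate conclusion, then $W\psi(z)=W\psi'(S^{-1}z)$ is still a Gaussian of the form $C''e^{-Mz\cdot z}$.
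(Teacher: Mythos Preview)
Your proof is correct, and for part~(i) it takes a genuinely different route from the paper's. The paper first applies Williamson's symplectic diagonalization $M=S^{T}DS$ with $D=\operatorname{diag}(\Lambda,\Lambda)$, passes to $\widehat{S}\psi$ by metaplectic covariance, and integrates the resulting inequality $W(\widehat{S}\psi)(z)\le Ce^{-Dz\cdot z}$ to obtain Hardy estimates in the already-diagonal form $A=B=\Lambda$; Proposition~\ref{propAB} then gives $\lambda_{j}^{\omega}\le 1$, hence $c(\Omega)\ge\pi$, and Theorem~\ref{Thm1} finishes. You bypass both Williamson and Theorem~\ref{Thm1}: by completing the square you read off the Hardy matrices $A=M/M_{PP}$, $B=M/M_{XX}$ directly, recognise via Lemma~\ref{LemmaProj} that $X_{A}=\Pi_{X}\Omega$ and $P_{B}=\Pi_{P}\Omega$, and then Corollary~\ref{cor1} immediately yields the dual-pair conclusion. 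Your path is shorter and avoids the capacity detour; the paper's path has the advantage of producing the stronger intermediate statement $c(\Omega)\ge\pi$, which is of independent interest.

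For part~(ii) the paper's proof is in fact silent, so your argument is not a variant but a genuine completion. The key step---deducing $M_{XP}=0$ from $M\le M_{0}$ together with the positive semi-definiteness of the diagonal blocks of $M-M_{0}$---is clean and correct, and the identification $M=M_{0}$ then follows at once from $AB=I_{n\times n}$.
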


\begin{proof}
(i) In view of Theorem \ref{Thm1} it is again sufficient to prove that
$c(\Omega)\geq\pi$ for any symplectic capacity $c$. In view of Williamson's
symplectic diagonalization theorem \cite{Birk,HZ} there exists $S\in
\operatorname*{Sp}(n)$ such that
\[
M=S^{T}DS\text{ , }D=%
\begin{pmatrix}
\Lambda & 0\\
0 & \Lambda
\end{pmatrix}
\]
where $\Lambda$ is the diagonal matrix with non-zero entries the symplectic
eigenvalues of $M$. Using the symplectic covariance of the Wigner distribution
\cite{Birk} it follows that we have
\begin{equation}
W(\widehat{S}\psi)(z)=W\psi(S^{-1}z)\leq Ce^{-Dz\cdot z} \label{1}%
\end{equation}
where $\widehat{S}\in\operatorname*{Mp}(n)$ is any of the two metaplectic
operators covering $S$. Setting $\psi^{\prime}=\widehat{S}\psi$ and
integrating this inequality with respect to $p$ and then $x$ we get, using the
marginal conditions (\ref{marginal1}) and (\ref{marginal2}),
\begin{equation}
|\psi^{\prime}(x)|^{2}\leq C^{\prime}e^{-Dx\cdot x}\text{ , }|F\psi^{\prime
}(p)|^{2}\leq C^{\prime}e^{-Dp\cdot p} \label{2}%
\end{equation}
where $C^{\prime}>0$ is a constant depending only on $C$ and the symplectic
eigenvalues $\lambda_{j}^{\omega}$ of $M$. Applying the multidimensional Hardy
uncertainty principle, we must have $\lambda_{j}^{\omega}\leq1$ for $1\leq
j\leq n$ and hence $c(\Omega)\geq\pi$ in view of formula (\ref{capellipsoid}).
The result now follows from Theorem \ref{Thm1}.
\end{proof}

\begin{acknowledgement}
This work has been financed by the Grant P 33447 N of the Austrian Research
Foundation FWF.
\end{acknowledgement}


\begin{thebibliography}{99}                                                                                               %


\bibitem {armios08}S. Artstein-Avidan, V. D. Milman, and Y. Ostrover. The
M-ellipsoid, Symplectic Capacities and Volume. \textit{Comment. Math. Helv}.
83(2), 359--369 (2008)

\bibitem {aros12}S. Artstein-Avidan and Y. Ostrover. Bounds for Minkowski
billiard trajectories in convex bodies.\textit{ Intern.} \textit{Math. Res.
Not.} (IMRM) (2012)

\bibitem {arkaos13}S. Artstein-Avidan, R. Karasev, and Y. Ostrover. From
Symplectic Measurements to the Mahler Conjecture. \textit{Duke Math. J.}
163(11), 2003--2022 (2014)

\bibitem {ABMB}G. Aubrun and S. J. Szarek, \textit{Alice and Bob meet Banach}.
Vol. 223. American Mathematical Soc., 2017

\bibitem {Ball}K. M. Ball. Ellipsoids of maximal volume in convex bodies.
\textit{Geom. Dedicata.} 41(2), 241--250 (1992)

\bibitem {Bastiaans}M. J. Bastiaans. Wigner distribution function and its
application to first-order optics, \textit{J. Opt. Soc. Am}. 69, 1710 (1979)

\bibitem {cogoni2}E. Cordero, M. de Gosson, and F. Nicola. On the positivity
of trace class operators. \textit{Adv. Theor. Math. Phys.} 23(8), 2061--2091 (2019)

\bibitem {ekhof1}I. Ekeland and H. Hofer. Symplectic topology and Hamiltonian
dynamics, \textit{Math. Z.} 200(3), 355--378 (1989)

\bibitem {ekhof2}I. Ekeland and H. Hofer. Symplectic topology and Hamiltonian
dynamics, \textit{Math. Z.} 203, 553--567 (1990)

\bibitem {fe06}H. G. Feichtinger, Modulation Spaces: Looking Back and Ahead
Sampl, \textit{Theory Signal Image Process}, 5(2), 109--140 (2006)

\bibitem {espo}G. Esposito, G. Marmo, G. Miele, and G. Sudarshan.
\textit{Advanced Concepts in Quantum Mechanics}, Cambridge University Press, 2015

\bibitem {Birk}M. de Gosson, \textit{Symplectic geometry and quantum
mechanics.} Vol. 166. Springer Science \& Business Media, 2006

\bibitem {go09}M. de Gosson. The Symplectic Camel and the Uncertainty
Principle: The Tip of an Iceberg? \textit{Found. Phys}. 99, 194 (2009)

\bibitem {stat}M. de Gosson, On the Use of Minimum Volume Ellipsoids and
Symplectic Capacities for Studying Classical Uncertainties for Joint
Position--Momentum Measurements. \textit{J. Stat. Mech.}: \textit{Theory and
Experiment }11,\textit{ }P11005 (2010)

\bibitem {Birkbis}M. de Gosson. \textit{Symplectic Methods in Harmonic
Analysis and in Mathematical Physics.} Birkh\"{a}user, Basel, 2011

\bibitem {Wigner}{\normalsize M. de Gosson. \textit{The Wigner Transform},
World Scientific, Series: Advanced Texts in mathematics, 2017}

\bibitem {ACHA}M. de Gosson. Two Geometric Interpretations of the
Multidimensional Hardy Uncertainty Principle. \textit{Appl. Comput. Harmon.
Anal.} 42(1), 143--153\textit{\ }(2017)

\bibitem {gopolar}M. de Gosson. Quantum Polar Duality and the Symplectic
Camel: a New Geometric Approach to Quantization. \textit{Found. Phys}. 51,
Article number: 60 (2021)

\bibitem {CMQR}C. de Gosson and M. de Gosson. On the Non-Uniqueness of
Statistical Ensembles Defining a Density Operator and a Class of Mixed Quantum
States with Integrable Wigner Distribution. \textit{Quantum Rep.}, 3, 473--481 (2021)

\bibitem {golu07}M. de Gosson and F. Luef. Quantum States and Hardy's
Formulation of the Uncertainty Principle: a Symplectic Approach, \textit{Lett.
Math. Phys}. 80, 69--82 (2007)

\bibitem {Gro}K. Gr\"{o}chenig. \textit{Foundations of Time-Frequency
Analysis}, Birkh\"{a}user, Boston, 2000

\bibitem {gr85}M. Gromov. Pseudoholomorphic curves in symplectic manifolds.
\textit{Inv. Math.} 82(2), 307--347 (1985)

\bibitem {ha32}G. H.\ Hardy. A theorem concerning Fourier transforms.
\textit{J. London. Math. Soc}. 8, 227--231 (1933)

\bibitem {hi02}J. Hilgevoord. The standard deviation is not an adequate
measure of quantum uncertainty. \textit{Am. J. Phys.} 70(10), 983 (2002)

\bibitem {hiuf}J. Hilgevoord and J. B. M. Uffink. Uncertainty Principle and
Uncertainty Relations. \textit{Found. Phys}. 15(9) 925 (1985)

\bibitem {HZ}H. Hofer and E. Zehnder. \textit{Symplectic Invariants and
Hamiltonian Dynamics}, Birkh\"{a}user Advanced Texts (Basler Lehrb\"{u}cher),
Birkh\"{a}user Verlag, 1994

\bibitem {Jakobsen}M. S. Jakobsen. On a (no longer) New Segal Algebra: a
review of the Feichtinger algebra, \textit{J. Fourier Anal. Appl.} 24(6),
1579--1660 (2018)

\bibitem {ALG2}Q. Li \ and J. G. Griffiths. Least squares ellipsoid specific
fitting. In\textit{Geometric modeling and processing}, 2004. Proceedings (pp.
335-340). IEEE, 2004

\bibitem {LouMiracle1}{\normalsize G. Loupias and S. Miracle-Sole. $C^{\ast}%
$-Alg\`{e}bres des syst\`{e}mes canoniques, I, \textit{Commun. math. Phys.} 2,
31--48 (1966) }

\bibitem {LouMiracle2}{\normalsize G. Loupias and S. Miracle-Sole. $C^{\ast}%
$-Alg\`{e}bres des syst\`{e}mes canoniques, II, \textit{Ann. Inst. Henri
Poincar\'{e}} 6(1), 39--58 (1967)}

\bibitem {Narcow1}F. J. Narcowich. Conditions for the convolution of two
Wigner functions to be itself a Wigner function, \textit{J. Math. Phys}.
30(11), 2036--2041 (1988)

\bibitem {neu}J. von Neumann. \textit{Mathematical Foundations of Quantum
Mechanics}, Princeton University Press, 1955 [Original German edition in 1932]

\bibitem {Pauli}W. Pauli. \textit{General principles of quantum mechanics},
Springer Science \& Business Media, 2012 [original title: \textit{Prinzipien
der Quantentheorie}, publ. in : Handbuch der Physik, v.5.1, 1958]

\bibitem {Leonid}L. Polterovich. \textit{The geometry of the group of
symplectic diffeomorphisms}. Birkh\"{a}user, 2012

\bibitem {ALG1}D. A. Turner, I. J. Anderson, J. C. Mason, and M. G. Cox. An
algorithm for fitting an ellipsoid to data. \textit{National Physical
Laboratory}, UK (1999)

\bibitem {MVE}S. Van Aelst and P. Rousseeuw, Minimum volume ellipsoid. Wiley
Interdisciplinary Reviews: \textit{Computational Statistics}, 1(1), 71--82 (2009)

\bibitem {zhang}F. Zhang. \textit{The Schur Complement and its Applications},
Springer, Berlin, 2005
\end{thebibliography}
\end{document}